\theoremstyle{plain}
\newtheorem{thm}{\protect\theoremname}
\theoremstyle{remark}
\newtheorem{rem}[thm]{\protect\remarkname}
\theoremstyle{plain}
\newtheorem{lem}[thm]{\protect\lemmaname}
\providecommand{\lemmaname}{Lemma}
\providecommand{\remarkname}{Remark}
\providecommand{\theoremname}{Theorem}
\begin{document}
\global\long\def\argmax#1{\underset{#1}{\operatorname{argmax}}}

\title{\LARGE \bf
Concentration to Zero Bit-Error Probability for Regular LDPC Codes
on the Binary Symmetric Channel: Proof by Loop Calculus
}

\author{Marc Vuffray$^{1}$ and Theodor Misiakiewicz$^{2}$% <-this % stops a space
\thanks{$^{1}$M. Vuffray is affiliated to the Center for Nonlinear Studies and Theoretical Division T-4 of Los Alamos National Laboratory, Los Alamos, NM 87544.
        {\tt\small vuffray@lanl.gov}}%
\thanks{$^{2}$Theodor Misiakiewicz is affiliated to ICFP, Département de Physique, Ecole Normale Supérieure de Paris,
75005 Paris, France
        {\tt\small theodor.misiakiewicz@ens.fr}}%
}

\maketitle
\thispagestyle{empty}
\pagestyle{empty}

%%%%%%%%%%%%%%%%%%%%%%%%%%%%%%%%%%%%%%%%%%%%%%%%%%%%%%%%%%%%%%%%%%%%%%%%%%%%%%%%
\begin{abstract}

In this paper we consider regular low-density parity-check codes over
a binary-symmetric channel in the decoding regime. We prove that up
to a certain noise threshold the bit-error probability of the bit-sampling
decoder converges in mean to zero over the code ensemble and the channel
realizations. To arrive at this result we show that the bit-error
probability of the sampling decoder is equal to the derivative of
a Bethe free entropy. The method that we developed is new and is based
on convexity of the free entropy and loop calculus. Convexity is needed
to exchange limit and derivative and the loop series enables us to
express the difference between the bit-error probability and the Bethe
free entropy. We control the loop series using combinatorial techniques
and a first moment method. We stress that our method is versatile
and we believe that it can be generalized for LDPC codes with general
degree distributions and for asymmetric channels.

\end{abstract}

%%%%%%%%%%%%%%%%%%%%%%%%%%%%%%%%%%%%%%%%%%%%%%%%%%%%%%%%%%%%%%%%%%%%%%%%%%%%%%%%
\section{Introduction\label{sec:Introduction}}

In 1968 Gallager \cite{gallager1968information} introduced error-correcting
codes based on low-density parity-check (LDPC) matrices. Since then
LDPC codes have been proven to be of great practical and theoretical
relevance. LDPC codes perform very well under iterative decoding on
a broad class of symmetric memoryless channels (BMS) \cite{Urb.45,Richardson2001}
and provably achieve capacity on the binary erasure channel (BEC)
\cite{oswald2002capacity}. Since 1996 they have been integrated into
many industrial standards from wireless communications to computer
chips.

An important performance measure of an LDPC code and its associated
decoder is the bit-error probability. It is the fraction of bits that
are on average incorrectly reconstructed. The bit-error probability
of LDPC codes under belief-propagation (BP) decoding is well-understood
on BMS channels using the method of density evolution \cite{Richardson2001a}.
However it is a more challenging task to control the bit-error probability
of the bit maximum a posteriori (MAP) decoder. 

Lower and upper bounds on the noise threshold for vanishing bit-MAP
error probability have already been derived in Gallager's thesis \cite{gallager1968information}
for a class of BMS channels. These bounds have been improved and generalized
for every BMS channels by Shamai and Sason \cite{Shamai2002}.

In an attempt to locate exactly the noise threshold, most of the attention has been focused on the conditional entropy per bit. Its derivative
with respect to the channel noise is the so-called generalized extrinsic
information transfer (GEXIT) curve \cite{Measson2009}. The GEXIT
curve is proportional to the ``magnetization'' or bit-error probability
of the bit-sampling decoder\footnote{The bit-sampling decoder assigns random values to decoded bits based on their
posterior marginal distribution.} for the BEC and the binary-input additive white Gaussian-noise channel
(BAWGNC) \cite{Korada2008}. The magnetization is an upper-bound on
the bit-MAP error probability. Hence for these two channels GEXIT
curves and bit-MAP error probabilities vanish in the same noise regime.

Surprisingly the conditional entropy and its derivative are related
to the BP algorithm and its associated Bethe free entropy. It has
been first proven in \cite{richardson2008modern,korada2007exact}
on the BEC channel that the conditional entropy is equal to an averaged
form of the Bethe free entropy over the code ensemble. Bounds between
the averaged Bethe free entropy and the conditional entropy are derived
in \cite{montanari2005tight}, \cite{macris2007griffith}, \cite{kudekar2009sharp}
based on the interpolation method of Guerra and Toninelli \cite{guerra2004high,guerra2002interpolation}.
Equality has been proven on the binary-symmetric channel (BSC) using
cluster expansions in a low-noise regime \cite{kudekar2011decay}
and in a high-noise regime \cite{Macris2013}. More recently equality
between the GEXIT curve and the derivative of the average Bethe free
entropy has been generalized to all BMS channels \cite{Giurgiu2013}
combining the interpolation method and spatially-coupled codes \cite{ZigFel}. 

Although the conditional entropy and its threshold are completely
characterized for BMS channels, its exact relation to the bit-MAP
error probability remains unclear in general and in particular for
the BSC channel. Due to Fano's inequality the conditional entropy
is always a lower-bound on the bit-error probability. However inspired
by previous results this inequality is conjectured to be tight for
LDPC codes on a wide class of channels. 

In this paper we prove that for regular LDPC codes over a BSC channel
the ``magnetization'' or bit-error probability of the bit-sampling
decoder vanishes up to a certain threshold. This result also shows
that the posterior measure of LDPC codes concentrates over the LDPC
ensemble and the noise realizations. To achieve this result we show
that the magnetization is asymptotically equal to a perturbed version
of the Bethe free entropy. The technique that we present is new and
is based on loop calculus or loop series derived by Chertkov and Chernyak
\cite{chertkov2006loop}. The loop series expresses the difference
between a quantity and its Bethe counterpart as a sum over subgraphs.
Proving that the loop series vanishes is tantamount to controlling a purely
combinatorial object that depends solely on the LDPC graph ensemble.
Suboptimal bounds on this object are obtained using McKay's estimates
\cite{mckay2010subgraphs} following an idea developed in \cite{Macris2013,Macris2012}.

The technique that we present has the advantage to be simple and versatile.
To emphasize this point we also show that our results can be easily
transposed to the BEC. Moreover we stress that our proofs do not rely
explicitly on properties of the channel. Hence, we believe that this
technique can be use to analyze LDPC codes over channels that are
not symmetric.

In Section \ref{sec:Main-Results} we give a precise definition of
the bit-sampling decoder and its associated bit-error probability
and we present our main theorems. In Section \ref{sec:Free-Entropy-Bethe}
we derive the relation between the Bethe free entropy and the bit-error
probability and we express the difference using loop calculus. In
Section \ref{sec:Free-Entropy-Bethe} we reduce the loop series to
a counting problem that we control with a first moment method. Finally
we discuss about future works and possible improvements in Section
\ref{sec:Path-Forward}.

\section{Main Results\label{sec:Main-Results}}

\subsection{Regular LDPC codes on BMS channels}

LDPC codes are defined by a regular bipartite graph $\Gamma=\left(V,C,E\right)$
where $V$ is the set of variable nodes, $C$ is the set of check nodes
and $E=V\times C$ is the set of undirected edges. There are $n=\left|V\right|$
variable nodes and $m=\left|C\right|$ check nodes. 

We consider regular LDPC codes with variable-node degrees $l\geq3$
and check-node degrees $r>l$. The design rate of the code is by definition
$R_{\text{des}}=1-l/r.$ 

An LDPC code is generated randomly. The graph $\Gamma$ is drawn uniformly
at random from the ensemble of $\left(l,r\right)$ regular bipartite
graphs. Throughout the paper we write $\mathbb{E}_{\Gamma}\left[\cdot\right]$
the expectation with respect to the ensemble of regular $\left(l,r\right)$
bipartite graphs with uniform probability.

Denote the neighbors of a variable node $i\in V$ (resp. a check node
$a\in C$) by $\partial i=\left\{ a\in C\mid\left(i,a\right)\in E\right\} $
(resp. by $\partial a=\left\{ i\in V\mid\left(i,a\right)\in E\right\} $).
A codeword is a sequence\footnote{As we use concepts from statistical physics it is more convenient
to employ the binary alphabet $\left\{ -1,1\right\} $ instead of
the traditional $\left\{ 0,1\right\} $. } $\underline{\sigma}=\left\{ \sigma_{i}\right\} _{i=1}^{n}\in\left\{ -1,1\right\} ^{n}$
that satisfies the parity-check sum
\begin{equation}
\prod_{i\in\partial a}\sigma_{i}=1,\label{eq:pr_partity_check}
\end{equation}
for all check nodes $a\in C$.

We transmit a codeword with uniform prior over a BMS channel with
transition probability $q\left(s_{i}\mid\sigma_{i}\right)$, where
the output of the channel could take any real value $s_{i}\in\mathbb{R}$. The symmetry property of the channel is expressed through the simple
relation
\begin{equation}
q\left(s_{i}\mid\sigma_{i}\right)=q\left(-s_{i}\mid-\sigma_{i}\right).\label{eq:pr_channel_symmetry}
\end{equation}
We assume without loss of generality that the all-zero codeword\footnote{In the binary alphabet $\left\{ -1,1\right\} $, the all-zero codeword
is the sequence $\left\{ 1,\ldots,1\right\} $.} is transmitted. Hence the output of the channel $\underline{s}=\left\{ s_{i}\right\} _{i=1}^{n}\in\mathbb{R}^{n}$
is i.i.d. with distribution $q\left(s_{i}\mid+1\right)$. The posterior
probability that the codeword $\underline{\sigma}$ is sent given
that $\underline{s}$ is transmitted reads 
\begin{equation}
\mu_{\Gamma}\left(\underline{\sigma}\mid\underline{s}\right)=\frac{1}{Z\left(\Gamma,\underline{s}\right)}\prod_{a\in C}\frac{1}{2}\left(1+\prod_{i\in\partial a}\sigma_{i}\right)\prod_{i\in V}q\left(s_{i}\mid\sigma_{i}\right),\label{eq:pr_posterior_measure}
\end{equation}
where the normalization factor $Z\left(\Gamma,\underline{s}\right)$
in Equation \eqref{eq:pr_posterior_measure} is the partition function
\begin{equation}
Z\left(\Gamma,\underline{s}\right):=\sum_{\underline{\sigma} \in \{-1,1\}^n}\prod_{a\in C}\frac{1}{2}\left(1+\prod_{i\in\partial a}\sigma_{i}\right)\prod_{i\in V}q\left(s_{i}\mid\sigma_{i}\right).\label{eq:pr_partition_function}
\end{equation}

\subsection{Concentration of the Bit-Error Probability for the Sampling Decoder}

We are interested in the performance of regular LDPC codes with respect
to the average bit-error probability of decoding. We consider the
bit-sampling decoder 
\begin{equation}
\widehat{\sigma}_{i}^{\text{sampling}}\left(\underline{s}\right):=\text{sample}\,\sigma_{i}\,\text{according to}\,\sum_{\underline{\sigma}\setminus\sigma_{i}}\mu\left(\underline{\sigma}\mid\underline{s}\right),\label{eq:pr_sampling_decoder}
\end{equation}
where $\underline{\sigma}\setminus\sigma_{i}$ denotes the sequence of variables $\underline{\sigma}$ with the $i$\textsuperscript{th} component removed.

The bit-error probability of the bit-sampling decoder $P_{\Gamma}^{\text{bit-sampling}}$ is directly
related to the marginals of the posterior probability \eqref{eq:pr_posterior_measure}
\begin{eqnarray}
P_{\Gamma}^{\text{bit-sampling}} & := & \frac{1}{2}\left(1-\mathbb{E}_{\underline{s}}\left[\frac{1}{n}\sum_{i=1}^{n}\left\langle \sigma_{i}\right\rangle _{\mid\underline{s}}\right]\right),\label{eq:pr_average_magnetization}
\end{eqnarray}
where $\mathbb{E}_{\underline{s}}\left[\cdot\right]$ denotes the
expectation with respect to the channel output distribution and $\left\langle \cdot\right\rangle _{\mid\underline{s}}$
denotes the average with respect to the posterior probability \eqref{eq:pr_posterior_measure}.
The expected quantity in Equation \eqref{eq:pr_average_magnetization}
is sometimes referred as the averaged magnetization in the physics
community.

An important question is to know when the bit-error probability is
vanishing in the limit where the codeword length goes to infinity.
In this paper we consider two families of symmetric channels, the
BEC and the BSC. The BEC has an output alphabet $s_{i}\in\left\{ -1,0,1\right\} $
and is characterized by transition probabilities
\begin{equation}
q^{\text{BEC}}\left(1\mid1\right)=1-\epsilon,\,q^{\text{BEC}}\left(0\mid1\right)=\epsilon,\,q^{\text{BEC}}\left(-1\mid1\right)=0,
\end{equation}
where $\epsilon\in\left[0,1\right]$ is the erasure probability. The
BSC has binary outputs $s_{i}\in\left\{ -1,1\right\} $ and is characterized
by the transition probabilities
\begin{equation}
q^{\text{BSC}}\left(1\mid1\right)=1-p,\,q^{\text{BSC}}\left(-1\mid1\right)=p,
\end{equation}
where $p\in\left[0,1/2\right]$ is the flipping probability.

Before we state our theorems we need to introduce the domain
\begin{eqnarray}
D\left(\rho\right) & = & \left\{ \left(x_{0},x_{c},\underline{y}\right)\in\left[0,1\right]^{2+\left\lfloor r/2\right\rfloor }\mid\sum_{t=1}^{\left\lfloor r/2\right\rfloor }y_{t}\leq1,\right.\nonumber \\
 &  & \left.\sum_{t=1}^{\left\lfloor r/2\right\rfloor }\frac{2t}{r}y_{t}=\left(1-\rho\right)x_{0}+\rho x_{c}\right\} .\label{eq:pr_domain}
\end{eqnarray}
We also need to introduce the auxiliary function $f:D\left(\rho\right)\times\left[0,1\right]\rightarrow\mathbb{R}$
defined as follows\footnote{The binary entropy $h_{2}\left(p\right):=-\left(1-p\right)\ln\left(1-p\right)-p\ln p$
is computed in nat.} 
\begin{eqnarray}
f\left(x_{0},x_{c},\underline{y},\rho\right) & = & -lh_{2}\left(\left(1-\rho\right)x_{0}+\rho x_{c}\right)\nonumber \\
 & &+  \left(1-\rho\right)h_{2}\left(x_{0}\right)+\rho h_{2}\left(x_{c}\right)\nonumber \\
 & &-  \frac{l}{r}\left(1-\sum_{t=1}^{r}y_{t}\right)\ln\left(1-\sum_{t=1}^{r}y_{t}\right)\nonumber \\
 & &-  \frac{l}{r}\sum_{t=1}^{r}y_{t}\ln y_{t}\nonumber \\
 & &+  \frac{l}{r}\sum_{t=1}^{\left\lfloor r/2\right\rfloor }y_{t}\ln\left(\begin{array}{c}
r\\
2t
\end{array}\right),\label{eq:pr_f_function}
\end{eqnarray}
and the function $k:\left[0,1\right]^{4}\rightarrow\mathbb{R}$ that
reads
\begin{equation}
k\left(x_{0},x_{c},\rho,p\right)=\left(\rho x_{c}-\left(1-\rho\right)x_{0}\right)\ln\left(\frac{1-p}{p}\right).\label{eq:pr_k_function}
\end{equation}

Our main contribution are the two theorems stated below which give
sufficient conditions on the channel parameters for concentration
of the bit-error probability of the sampling decoder. 
\begin{thm}[Concentration of the Bit-Error Probability for the BEC]
\label{thm:PR_concentration_error_BEC}Consider the ensemble of $\left(l,r\right)$
regular LDPC codes on a BEC with erasure probability $\epsilon$.
If the following function achieves its maximum only at the point 
\[
\argmax{\left(0,x_{c},\underline{y}\right)\in D\left(\epsilon\right)}f\left(0,x_{c},\underline{y},\epsilon\right)=\left\{ \left(0,0,0\right)\right\} ,
\]
then the bit-error probability of the sampling decoder converges in
mean to zero in the large codeword limit 
\[
\lim_{n\rightarrow\infty}\mathbb{E}_{\Gamma,\underline{s}}\left[P_{\Gamma}^{\text{bit-sampling}}\right]=0.
\]

\end{thm}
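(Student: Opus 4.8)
The plan is to prove the equivalent statement that the averaged magnetization converges to one: by \eqref{eq:pr_average_magnetization} one has $\mathbb{E}_{\Gamma,\underline{s}}[P_\Gamma^{\text{bit-sampling}}]=\tfrac12\bigl(1-\mathbb{E}_{\Gamma,\underline s}[\tfrac1n\sum_{i}\langle\sigma_i\rangle_{\mid\underline s}]\bigr)$, so vanishing of the bit-error probability is the same as $\mathbb{E}_{\Gamma,\underline s}[\tfrac1n\sum_i\langle\sigma_i\rangle_{\mid\underline s}]\to1$. The first step is to realize this magnetization as the derivative of a free entropy. Coupling a uniform field $h$ to the spins in \eqref{eq:pr_partition_function}, I set $Z(\Gamma,\underline s,h)=\sum_{\underline\sigma}e^{h\sum_i\sigma_i}\prod_{a\in C}\tfrac12(1+\prod_{i\in\partial a}\sigma_i)\prod_{i\in V}q(s_i\mid\sigma_i)$ and $\Phi_n(h)=\tfrac1n\mathbb{E}_{\Gamma,\underline s}[\ln Z(\Gamma,\underline s,h)]$. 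Differentiating under the expectation gives $\Phi_n'(0)=\mathbb{E}_{\Gamma,\underline s}[\tfrac1n\sum_i\langle\sigma_i\rangle_{\mid\underline s}]$, so it suffices to show $\Phi_n'(0)\to1$.

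Next I would invoke the loop calculus of Chertkov and Chernyak \cite{chertkov2006loop} to factor $Z(\Gamma,\underline s,h)=Z_{\mathrm B}(\Gamma,\underline s,h)\bigl(1+\sum_{g}w_g(h)\bigr)$, where $Z_{\mathrm B}$ is the Bethe partition function attached to a fixed point of belief propagation and the sum runs over generalized loops of $\Gamma$, that is, subgraphs in which every variable node has degree at least two and every check node has even degree at least two. Taking logarithms splits $\Phi_n(h)=\Phi_n^{\mathrm B}(h)+\tfrac1n\mathbb{E}_{\Gamma,\underline s}[\ln(1+\sum_g w_g(h))]$. This is where convexity is essential: each $\ln Z(\Gamma,\underline s,h)$ is convex in $h$ as a log-partition function, hence so is $\Phi_n$, and convex functions that converge pointwise on an interval also converge in their derivatives wherever the limit is differentiable. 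Thus, provided I show that the loop term vanishes for $h$ in a neighbourhood of the origin, I obtain $\Phi_n\to\Phi^{\mathrm B}$ there, with $\Phi^{\mathrm B}$ the explicit limiting Bethe free entropy, and therefore $\Phi_n'(0)\to(\Phi^{\mathrm B})'(0)$.

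The combinatorial heart is to show that the loop term vanishes. On the BEC a non-erased coordinate carries a perfectly polarized message and so cannot lie on a generalized loop; this is exactly why the hypothesis constrains $f$ only along $x_0=0$, the loops being supported on erased coordinates. I would pass to a first-moment (annealed) estimate, averaging $\sum_g|w_g(h)|$ over the regular ensemble and grouping loops by the fraction $x_c$ of erased coordinates they occupy and the profile $\underline y$ recording the fraction of checks of loop-degree $2t$. McKay's subgraph-count estimates \cite{mckay2010subgraphs} then yield $\mathbb{E}_\Gamma[\sum_{g\neq\emptyset}|w_g|]\le\sum_{(x_c,\underline y)\neq0}e^{n\,f(0,x_c,\underline y,\epsilon)+o(n)}$, the function $f$ of \eqref{eq:pr_f_function} being precisely this exponential rate and the empty loop corresponding to the origin of $D(\epsilon)$, where $f=0$. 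The hypothesis that the maximum of $f(0,\cdot,\cdot,\epsilon)$ over $D(\epsilon)$ is attained only at $(0,0,0)$ forces $f<0$ on every nontrivial profile, so the right-hand side tends to zero and, by a Markov inequality, the loop term vanishes in mean, uniformly for $h$ near the origin. Feeding this into the convexity argument gives $\Phi_n'(0)\to(\Phi^{\mathrm B})'(0)=1$, the Bethe free entropy on the BEC yielding magnetization one in the decodable regime singled out by the hypothesis, and the theorem follows.

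The hard part is the combinatorial control just sketched: turning the sum over the signed, field-dependent loop weights into the clean large-deviation functional $f$, justifying the annealed estimate and the passage from a statement about $\mathbb{E}_\Gamma$ to one holding uniformly for $h$ near zero, and checking that McKay's estimates reproduce exactly the entropies $h_2(\cdot)$ and the binomial weights $\binom{r}{2t}$ of \eqref{eq:pr_f_function}. The BEC is the benign case, since the loop weights are nonnegative and supported only on erased coordinates so that the first moment already decides concentration; the delicate points are the uniformity in $h$ needed to commute the limit with the derivative and the verification that the exponential rate is genuinely $f$ and not merely bounded above by it.
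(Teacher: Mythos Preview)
Your route is the paper's route: magnetization as the $h$-derivative of a perturbed free entropy, loop-calculus decomposition at a BP fixed point, Jensen (first moment) over the graph ensemble, McKay subgraph counts producing the rate $f$, and convexity to swap limit and derivative. Two points need sharpening to make it go through. First, the paper fixes specifically the \emph{ferromagnetic} BP solution $\nu^+(\sigma)=\hat\nu^+(\sigma)=\tfrac{1+\sigma}{2}$, which is a fixed point for every channel and every $h$; at this fixed point the Bethe free entropy is $\tfrac1n\sum_i\ln q(s_i\mid+1)$, independent of $h$, so $(\Phi^{\mathrm B})'(0)=1$ is immediate rather than a separate fact about a ``decodable regime''. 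The reason non-erased bits drop out of the loop sum is not the BP message but that the variable factor $\kappa_i$ carries $e^{-2\lambda(s_i)}$, which vanishes when $\lambda(s_i)=+\infty$; the same computation forces surviving loops to have variable degree exactly $l$ and even check degree, which is what makes the combinatorics land on $f$.

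Second, and this is the one genuine omission, the bound you wrote has $\epsilon$ in the fourth slot of $f$ and the domain $D(\epsilon)$, but what actually appears after taking $\mathbb{E}_\Gamma$ is the \emph{realized} erased fraction $\rho=\tfrac1n|\{i:s_i=0\}|$ and the domain $D(\rho)$: the loop weight for the BEC is $K_+^{\text{BEC}}(0,x_c)=e^{-2n\eta x_c\rho}$ and the McKay count depends on how many erased coordinates there are. The paper closes this by conditioning on the Hoeffding-typical event $|\rho-\epsilon|\le\sqrt{\ln n/n}$, bounding the complement by a crude uniform bound on the free entropy, and proving a stability lemma (Lemma~\ref{lem:fm_typical_sequence_optimization}) that the unique-maximizer-at-the-origin hypothesis persists under $O(\sqrt{\ln n/n})$ perturbations of $\rho$. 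The uniformity in $h$ you flagged is handled similarly (Lemma~\ref{lem:fm_eta_zero}): near the origin the objective behaves like $(\tfrac{l}{2}-1)\ln X$ along the constraint, which for $l\ge3$ beats any linear term $-2\eta X$, so a small negative $\tilde\eta$ cannot move the maximizer.
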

The same theorem holds for the BSC with a similar condition. 
\begin{thm}[Concentration of the Bit-Error Probability for the BSC]
\label{thm:PR_concentration_error_BSC}Consider the ensemble of $\left(l,r\right)$
regular LDPC codes on a BSC with flipping probability $p$. If the
following function achieves its maximum only at the point
\[
\argmax{\left(x_{0},x_{c},\underline{y}\right)\in D\left(p\right)}f\left(x_{0},x_{c},\underline{y},p\right)+k\left(x_{0},x_{c},p,p\right)=\left\{ \left(0,0,0\right)\right\} ,
\]
then the bit-error probability of the sampling decoder converges in
mean to zero in the large codeword limit 
\[
\lim_{n\rightarrow\infty}\mathbb{E}_{\Gamma,\underline{s}}\left[P_{\Gamma}^{\text{bit-sampling}}\right]=0.
\]
\end{thm}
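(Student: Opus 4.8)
The plan is to follow the route outlined in the introduction: identify the averaged magnetization with a field-derivative of an averaged free entropy, use the Chertkov--Chernyak loop series to isolate the Bethe contribution, remove the remaining loop corrections by a first-moment estimate, and invoke convexity to exchange the large-$n$ limit with the derivative. Concretely, writing the BSC transition probability as $q^{\text{BSC}}(s_i\mid\sigma_i)\propto e^{Bs_i\sigma_i}$ with $B=\tfrac12\ln\frac{1-p}{p}$ and coupling an auxiliary field $h$ directly to the bits, set
\[ Z(\Gamma,\underline s,h)=\sum_{\underline\sigma}\prod_{a\in C}\frac{1+\prod_{i\in\partial a}\sigma_i}{2}\prod_{i\in V}q(s_i\mid\sigma_i)\,e^{h\sigma_i}. \]
Then $\frac1n\sum_i\langle\sigma_i\rangle_{\mid\underline s}=\frac{d}{dh}\big[\frac1n\ln Z(\Gamma,\underline s,h)\big]_{h=0}$, so defining the averaged free entropy $\phi_n(h):=\frac1n\mathbb E_{\Gamma,\underline s}[\ln Z(\Gamma,\underline s,h)]$, expression \eqref{eq:pr_average_magnetization} becomes $\mathbb E_{\Gamma,\underline s}[P_\Gamma^{\text{bit-sampling}}]=\tfrac12(1-\phi_n'(0))$. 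Since $\phi_n$ is convex in $h$ (an average of log-partition functions) and the finite-$n$ interchange of expectation and derivative is harmless, it suffices to prove $\phi_n'(0)\to1$.

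Next I would apply loop calculus to this tilted model. Expanding about the decoding BP fixed point, the Chertkov--Chernyak identity \cite{chertkov2006loop} gives $\ln Z(\Gamma,\underline s,h)=\ln Z^{\text{Bethe}}(\Gamma,\underline s,h)+\ln(1+\sum_{g}w_g)$, where the sum runs over generalized loops $g$ (subgraphs of $\Gamma$ in which every node has degree at least two, and every check meets an even number of loop-edges) and each $w_g$ is a product of fixed-point weights over the nodes of $g$. The Bethe term is the free entropy of the perfect-decoding fixed point, whose magnetization is $1$; computing $(\phi^{\text{Bethe}})'(0)=1$ is comparatively explicit, following from stationarity of the Bethe functional at the fixed point. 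Hence it remains to show that the loop correction $\frac1n\mathbb E_{\Gamma,\underline s}[\ln(1+\sum_g w_g)]$ vanishes, which by Jensen's inequality and a first-order bound on the logarithm reduces to proving $\mathbb E_{\Gamma,\underline s}\big[\sum_{g\neq\emptyset}|w_g|\big]\to0$.

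Then I would carry out the first-moment estimate, the heart of the argument. Classify each generalized loop by the fraction $x_0$ (resp. $x_c$) of unflipped (resp. flipped) variable nodes it contains and by the profile $\underline y=(y_t)_{t}$ recording the fraction of checks meeting exactly $2t$ loop-edges; the edge-balance relation $\sum_t\frac{2t}{r}y_t=(1-p)x_0+px_c$ together with the normalization $\sum_t y_t\le1$ cut out precisely the domain $D(p)$ of \eqref{eq:pr_domain}. McKay's subgraph estimates \cite{mckay2010subgraphs} convert the number of loops of a given type into the exponential rate $f(x_0,x_c,\underline y,p)$ of \eqref{eq:pr_f_function}, while the product of fixed-point weights, after averaging the flip pattern over the channel with density $p$, contributes $k(x_0,x_c,p,p)$ of \eqref{eq:pr_k_function}. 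Thus $\frac1n\ln\mathbb E_{\Gamma,\underline s}[\sum_{g}|w_g|]\to\max_{D(p)}(f+k)$, and the hypothesis that this maximum is attained only at $(0,0,0)$, where $f+k=0$, forces every nonempty loop type to be exponentially suppressed, so the loop series vanishes in mean.

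Finally, convexity closes the argument: having established $\phi_n(h)\to\phi^{\text{Bethe}}(h)$ pointwise on a neighborhood of $0$, with each $\phi_n$ convex and the limit differentiable at $0$ with derivative $1$, the classical fact that pointwise limits of convex functions have convergent derivatives at points of differentiability yields $\phi_n'(0)\to1$, hence $\mathbb E_{\Gamma,\underline s}[P_\Gamma^{\text{bit-sampling}}]\to0$. The step I expect to be the main obstacle is the first-moment estimate: making McKay's counting uniform over all loop types and matching it cleanly to $f$, while bounding the weight products by $k$ after the channel average, so that the two combine into the single rate $f+k$ on $D(p)$. A second, subtler point is that convexity only upgrades pointwise convergence of the free entropies to convergence of derivatives, so the loop bound must be established for the tilted model on an entire $h$-interval, letting $k$ acquire an $h$-dependence, rather than at $h=0$ alone.
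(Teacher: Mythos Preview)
Your overall architecture---tilt, differentiate, loop-expand at the ferromagnetic fixed point, first-moment, convexity---matches the paper's. The genuine gap is in the first-moment step: you bring \emph{both} expectations $\mathbb{E}_\Gamma$ and $\mathbb{E}_{\underline{s}}$ inside the logarithm and assert that averaging the loop weight over the channel produces the factor $k(x_0,x_c,p,p)$. This is exactly what fails. At the ferromagnetic fixed point the loop weight is $K_+(g)=\exp\bigl(-2h|V_g|-2\sum_{i\in V_g}\lambda(s_i)\bigr)$, and by channel symmetry
\[
\mathbb{E}_{s_i}\!\left[e^{-2\lambda(s_i)}\right]=\sum_s q(s\mid{+}1)\,\frac{q(s\mid{-}1)}{q(s\mid{+}1)}=1,
\]
so $\mathbb{E}_{\underline{s}}[K_+(g)]=e^{-2h|V_g|}$, carrying no trace of the flip pattern. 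At $h=0$ the channel-annealed sum is therefore the expected number of ferromagnetic loops, which is exponentially large; for $h<0$ (which you correctly note you need) it is larger still. Hence $\mathbb{E}_{\Gamma,\underline{s}}\bigl[\sum_g|w_g|\bigr]$ does not tend to zero, nor is its exponential rate given by $\max_{D(p)}(f+k)$. The function $k$ is not a channel average at all: it is the value of $-2\sum_{i\in V_g}\lambda(s_i)$ at a \emph{fixed} channel realization with empirical flip fraction $\rho$, once the loop is classified by how it intersects the flipped and unflipped variable sets.

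The paper's remedy, which your outline is missing, is to separate the two expectations: condition on typical channel outputs ($|\rho-p|\le\sqrt{\ln n/n}$, controlled by Hoeffding), apply Jensen over $\Gamma$ only, and then show that the quenched-in-$\underline{s}$ exponent $f(x_0,x_c,\underline y,\rho)+k(x_0,x_c,\rho,p)$ remains nonpositive uniformly for $\rho$ near $p$ and for $\eta$ in a one-sided neighbourhood of $0$ (Lemmas~\ref{lem:fm_eta_zero} and~\ref{lem:fm_typical_sequence_optimization}). A smaller point: even after this fix the expected loop series is only shown to be polynomially bounded (types near the origin have rate approaching $0$), which suffices for $\tfrac1n\ln(1+\cdot)\to0$; your reduction via $\ln(1+x)\le x$ to $\mathbb{E}[\sum_g w_g]\to0$ is stronger than what is needed and does not hold.
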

\begin{rem}
Knowing that $P_{\Gamma}^{\text{bit-sampling}}$ vanishes implies
that with high probability the posterior measure \eqref{eq:pr_posterior_measure}
concentrates on configurations that are at a Hamming distance $o\left(n\right)$
from the all-zero codeword.

We perform the global optimization numerically and we find for a few
cases the maximum value of noise $\epsilon_{\text{loop}}$ and $p_{\text{loop}}$
for which Theorem~\ref{thm:PR_concentration_error_BEC} and Theorem~\ref{thm:PR_concentration_error_BSC}
hold. Critical values of noise are displayed in Table~\ref{tab:thresholds_BEC}
for the BEC and in Table~\ref{tab:thresholds_BSC} for the BSC.

\begin{table}[tbh]
\begin{centering}
\begin{tabular}{ccccccc}
\hline 
$l$ & $r$ & $R_{\text{des}}$ & $\epsilon_{\text{BP}}$ & $\epsilon_{\text{loop}}$ & $\epsilon_{\text{MAP}}$ & $\epsilon_{\text{Sh}}$\tabularnewline
\hline 
\hline 
3 & 4 & 1/4 & 0.64743 & 0.7442(9) & 0.74601 & 0.75\tabularnewline
\hline 
3 & 5 & 2/5 & 0.51757 & 0.5872(4) & 0.59098 & 0.6\tabularnewline
\hline 
3 & 6 & 1/2 & 0.42944 & 0.4833(6) & 0.48815 & 0.5\tabularnewline
\hline 
4 & 6 & 1/3 & 0.50613 & 0.5767(2) & 0.66565 & 0.66667\tabularnewline
\hline 
\end{tabular}
\par\end{centering}

\centering{}\protect\caption{\label{tab:thresholds_BEC}Thresholds for some regular LDPC code ensembles
over the BEC with erasure probability $\epsilon$. The belief-propagation
threshold is $\epsilon_{\text{BP}}$, the maximum a posteriori threshold
is $\epsilon_{\text{MAP}}$, the Shannon threshold is $\epsilon_{\text{Sh}}$
and our threshold is $\epsilon_{\text{loop}}$. Values of BP and MAP
thresholds are from \cite{mezard09information}.}
\end{table}

\begin{table}[tbh]

\begin{centering}
\begin{tabular}{ccccccc}
\hline 
$l$ & $r$ & $R_{\text{des}}$ & $p_{\text{BP}}$ & $p_{\text{loop}}$ & $p_{\text{MAP}}$ & $p_{\text{Sh}}$\tabularnewline
\hline 
\hline 
3 & 4 & 1/4 & 0.16692 & 0.2014(2) & 0.21011 & 0.21450\tabularnewline
\hline 
3 & 5 & 2/5 & 0.11382 & 0.1146(8) & 0.13841 & 0.14610\tabularnewline
\hline 
3 & 6 & 1/2 & 0.08402 & 0.0678(9) & 0.10101 & 0.11003\tabularnewline
\hline 
4 & 6 & 1/3 & 0.11692 & 0.1705(2) & 0.17261 & 0.17395\tabularnewline
\hline 
\end{tabular}
\par\end{centering}

\centering{}\protect\caption{\label{tab:thresholds_BSC}Thresholds for some regular LDPC code ensembles
over the BSC with erasure probability $p$. The belief-propagation
threshold is $p_{\text{BP}}$, the maximum a posteriori threshold
is $p_{\text{MAP}}$, the Shannon threshold is $p_{\text{Sh}}$ and
our threshold is $p_{\text{loop}}$. Values of BP and MAP thresholds
are from \cite{mezard09information}.}
\end{table}
We would expect that the probability of error vanishes
for $\epsilon<\epsilon_{\text{MAP}}$ and $p<p_{\text{MAP}}.$ Although
the thresholds that we found are reasonably close to $\epsilon_{\text{MAP}}$
and $p_{\text{MAP}}$ for graphs with small degrees, they become worse
in the limit of large degrees. A quick inspection of \eqref{eq:pr_f_function}
and \eqref{eq:pr_k_function} shows that the functions $f/l$ and
$k/l$ become independent of the noise parameter in the limit where
$l$ and $r$ go to infinity with a fixed ratio $l/r$. It implies
that $p_{\text{loop}}$ and $\epsilon_{\text{loop}}$ vanish. This
behavior is in the opposite direction to what we can expect as in
the limit of large degrees $p_{\text{MAP}}\rightarrow p_{\text{Sh}}$.
In Section~\ref{sec:Path-Forward} we discuss about possible improvements
in our analysis in order to make our thresholds tight.

The rest of the paper is organized as follows. In Section~\ref{sec:Free-Entropy-Bethe}
we show that the bit-error probability is related to the derivative
of the so-called free entropy. Using the loop series, we express the
free entropy as a combinatorial sum over subgraphs. In Section~\ref{sec:First-Moment-Method}
we control the loop series with asymptotic estimates on subgraphs
and Laplace's method. We prove Theorems~\ref{thm:PR_concentration_error_BEC}
and \ref{thm:PR_concentration_error_BSC} in this section. In Section~\ref{sec:Path-Forward}
we discuss future directions and ways to improve and generalize our
results.
\end{rem}

\section{Free Entropy, Bethe Approximation and Loop Series\label{sec:Free-Entropy-Bethe}}

\subsection{The Free Entropy and its Relation to the Bit-Error Probability}

The bit-error probability \eqref{eq:pr_average_magnetization} is
related to a ``perturbed'' version of the partition function \eqref{eq:pr_partition_function}.
Let $\eta\in\mathbb{\mathbb{R}}$ be the perturbation parameter entering
in the perturbed partition function 
\begin{eqnarray}
Z\left(\Gamma,\underline{s},\eta\right)& := &\sum_{\underline{\sigma}}\prod_{a\in C}\frac{1}{2}\left(1+\prod_{i\in\partial a}\sigma_{i}\right) \nonumber \\ 
&  &\times  \prod_{i\in V}q\left(s_{i}\mid\sigma_{i}\right) e^{\eta\left(\sigma_{i}-1\right)}.\label{eq:fe_perturbed_partition_function}
\end{eqnarray}
Note that $Z\left(\Gamma,\underline{s},\eta\right)$ is a non-increasing
function of $\eta$ and $Z\left(\Gamma,\underline{s},0\right)$ is
the original partition function \eqref{eq:pr_partition_function}.

The free entropy is the (normalized) logarithm of the partition function
\eqref{eq:fe_perturbed_partition_function}
\begin{equation}
\phi\left(\Gamma,\underline{s},\eta\right):=\frac{1}{n}\ln Z\left(\Gamma,\underline{s},\eta\right).\label{eq:fe_free_entropy}
\end{equation}
A direct computation shows that the derivative of the free entropy
with respect to its perturbation parameter reads 
\begin{equation}
\left.\frac{\partial}{\partial\eta}\phi\left(\Gamma,\underline{s},\eta\right)\right|_{\eta=0}=\frac{1}{n}\sum_{i=1}^{n}\left\langle \sigma_{i}\right\rangle _{\underline{s}}-1.
\end{equation}
Therefore the bit-error probability is related to the average entropy
through the following relation 
\begin{equation}
\left.\frac{\partial}{\partial\eta}\mathbb{E}_{\underline{s}}\left[\phi\left(\Gamma,\underline{s},\eta\right)\right]\right|_{\eta=0}=-2P_{\Gamma}^{\text{bit-sampling}}.\label{eq:fe_free_entropy_bit_error}
\end{equation}

Since $Z\left(\Gamma,\underline{s},\eta\right)$ is a non-increasing
function of $\eta$, the free entropy is non-increasing as well. Moreover
the free entropy is a convex function of $\eta$ as it can easily
be verified by taking twice the derivative with respect to $\eta$.
It implies that in order to show concentration of the bit-error probability
it is sufficient to prove that there exists $\widetilde{\eta}<0$
independent of $n$ such that $\mathbb{E}_{\Gamma,\underline{s}}\left[\phi\left(\Gamma,\underline{s},\widetilde{\eta}\right)\right]\rightarrow0$.
If this condition is true then, thanks to monotonicity, the limit
is also equal to zero for all $\eta\in\left[\widetilde{\eta},\infty\right[$.
Finally convexity of the free entropy enables us to exchange limit
and derivative (see \cite[p. 203]{Urruty2001}).

In order to prove that the free entropy vanishes we decompose it into
two contributions: the Bethe free entropy that can be computed explicitly
and the so-called loop series that is a sum over subgraphs of $\Gamma$.
Using a first moment method and combinatorial tools from graph theory,
we show that with high probability the loop series vanishes in the
large codeword limit. The last statement implies that the free entropy
is equal to the Bethe free entropy.

\subsection{The Bethe Approximation }

The Bethe free entropy is an approximation of the free entropy \eqref{eq:fe_free_entropy}.
It is defined as a functional over ``messages'' that are probability
distributions $\nu_{i\rightarrow a}\left(\sigma_{i}\right)$, $\widehat{\nu}_{a\rightarrow i}\left(\sigma_{i}\right)$
associated with the directed edges $i\rightarrow a$, $a\rightarrow i$
of the graph. The messages satisfy the so-called belief-propagation
(BP) equations. For the free entropy \eqref{eq:fe_free_entropy} the
BP equations take the following form
\begin{eqnarray}
\widehat{\nu}_{a\rightarrow i}\left(\sigma_{i}\right) & \propto & \sum_{\underline{\sigma}_{\partial a}\setminus\sigma_{i}}\frac{1}{2}\left(1+\prod_{i\in\partial a}\sigma_{i}\right)\prod_{j\in\partial a\setminus i}\nu_{j\rightarrow a}\left(\sigma_{i}\right)\nonumber \\
\nu_{i\rightarrow a}\left(\sigma_{i}\right) & \propto & e^{\eta\left(\sigma_{i}-1\right)}q\left(s_{i}\mid\sigma_{i}\right)\prod_{b\in\partial i\setminus a}\widehat{\nu}_{b\rightarrow i}\left(\sigma_{i}\right),\label{eq:be_BP_equations}
\end{eqnarray}
where the symbol $\propto$ denotes equality up to a normalization
factor and $\underline{\sigma}_{\partial a}:= \{\sigma_j \mid j\in \partial a\}$.

The Bethe free entropy evaluated at a fixed point of the BP equations
is a sum of local contributions from nodes and edges of the graph
$\Gamma=\left(V,C,E\right)$
\begin{equation}
\phi_{\left(\underline{\nu},\underline{\widehat{\nu}}\right)}^{{\rm Bethe}}\left(\Gamma,\underline{s},\eta\right):=\frac{1}{n}\sum_{a\in C}F_{a}+\frac{1}{n}\sum_{i\in V}F_{i}-\frac{1}{n}\sum_{\left(i,a\right)\in E}F_{ia},\label{eq:be_Bethe_free_entropy}
\end{equation}
where 
\begin{eqnarray}
F_{a} & = & \ln\left(\sum_{\underline{\sigma}_{\partial a}}\frac{1}{2}\left(1+\prod_{i\in\partial a}\sigma_{i}\right)\prod_{j\in\partial a}\nu_{j\rightarrow a}\left(\sigma_{i}\right)\right)\nonumber \\
F_{i} & = & \ln\left(\sum_{\sigma_{i}}e^{\eta\left(\sigma_{i}-1\right)}q\left(s_{i}\mid\sigma_{i}\right)\prod_{b\in\partial i}\widehat{\nu}_{b\rightarrow i}\left(\sigma_{i}\right)\right)\nonumber \\
F_{ia} & = & \ln\left(\sum_{\sigma_{i}}\nu_{i\rightarrow a}\left(\sigma_{i}\right)\widehat{\nu}_{a\rightarrow i}\left(\sigma_{i}\right)\right).
\end{eqnarray}
Note that once a fixed-point of the BP equations \eqref{eq:be_BP_equations}
is found, computing the Bethe free entropy \eqref{eq:be_Bethe_free_entropy}
is a computationally easy task.

\subsection{Corrections to the Bethe Free Entropy: the Loop Series}

The difference between the free entropy and the Bethe free entropy
can be expressed with the so-called loop series derived by Chertkov
and Chernyak \cite{chertkov2006loop}. It takes the form of the logarithm
of a weighted sum over subgraphs of $\Gamma$. These subgraphs are
called ``loops'' for they have no dangling edges. Note that if $\Gamma$
is a tree no such subgraph exists and we recover the well-known result
that the Bethe free entropy is exact on trees.

We recall that a subgraph $g=\left(V_{g},C_{g},E_{g}\right)$ of $\Gamma=\left(V,C,E\right)$
is any graph with vertex set $V_{g}\subset V$, factor node set $C_{g}\subset C$
and edge set $E_{g}\subset\left(V_{g}\times C_{g}\right)\cap E$.
For simplicity we denote the relation ``$g$ is a subgraph of $\Gamma$''
with the inclusion symbol $g\subset\Gamma$. We also denote the induced
neighborhood in $g$ of a variable node $i\in V_{g}$ (resp. check
node $a\in C_{g}$) by $\partial_{g}i=\partial i\cap V_{g}$ (resp.
by $\partial_{g}a=\partial a\cap C_{g}$). 

The set of ``loops'' consists of any non-empty subgraphs, not necessarily
connected, with no degree one variable-node and no degree one check-node
\begin{equation}
\mathcal{L}_{\Gamma}:=\left\{ g\subset\Gamma\mid\forall i\in V_{g},\left|\partial_{g}i\right|\geq2\text{ and }\forall a\in C_{g},\left|\partial_{g}a\right|\geq2\right\} .\label{eq:LC_loop_ensemble}
\end{equation}
The difference between the free entropy and the Bethe free entropy
is related to the loop series through the following equation
\begin{equation}
\phi\left(\Gamma,\underline{s},\eta\right)-\phi_{\left(\underline{\nu},\underline{\widehat{\nu}}\right)}^{{\rm Bethe}}\left(\Gamma,\underline{s},\eta\right)=\frac{1}{n}\ln\left(Z_{\left(\underline{\nu},\underline{\widehat{\nu}}\right)}^{{\rm loop}}\right),\label{eq:LC_loop_correction}
\end{equation}
where the argument of the logarithm is a weighted sum over loops 
\begin{equation}
Z_{\left(\underline{\nu},\underline{\widehat{\nu}}\right)}^{{\rm loop}}:=1+\sum_{g\in\mathcal{L}_{\Gamma}}K_{\left(\underline{\nu},\underline{\widehat{\nu}}\right)}\left(g\right).\label{eq:LC_Z_corr}
\end{equation}
The weight function over loops depends on the BP fixed point at which
the Bethe free entropy is evaluated and can be expressed as a product
over the nodes inside a loop 
\begin{equation}
K_{\left(\underline{\nu},\underline{\widehat{\nu}}\right)}\left(g\right):=\prod_{i\in V_{g}}\kappa_{i}\prod_{a\in C_{g}}\kappa_{a}.\label{eq:LC_loop_weight}
\end{equation}
The factors $\kappa_{i}$ and $\kappa_{a}$ entering in \eqref{eq:LC_loop_weight}
depend only on messages that are associated with edges neighboring
the nodes $i\in V_{g}$ and $a\in C_{g}$
\begin{eqnarray}
\kappa_{i} & := & \left(\sum_{\sigma_{i}}q\left(s_{i}\mid\sigma_{i}\right)e^{\eta\left(\sigma_{i}-1\right)}\prod_{a\in\partial i}\widehat{\nu}_{a\rightarrow i}\left(\sigma_{i}\right)\right)^{-1}\nonumber \\
 & \times & \left(\sum_{\sigma_{i}}q\left(s_{i}\mid\sigma_{i}\right)e^{\eta\left(\sigma_{i}-1\right)}\prod_{a\in\partial i\setminus\partial_{g}i}\widehat{\nu}_{a\rightarrow i}\left(\sigma_{i}\right)\right.\nonumber \\
 & &\times  \left.\prod_{a\in\partial_{g}i}\sigma_{i}\nu_{i\rightarrow a}\left(-\sigma_{i}\right)\right),\label{eq:LC_kappa_i}
\end{eqnarray}
and
\begin{eqnarray}
\kappa_{a} & := & \left(\sum_{\underline{\sigma}_{\partial a}}\left(1+\prod_{i\in\partial a}\sigma_{i}\right)\prod_{i\in\partial a}\nu_{i\rightarrow a}\left(\sigma_{i}\right)\right)^{-1}\nonumber \\
 & \times & \left(\sum_{\underline{\sigma}_{\partial a}}\left(1+\prod_{i\in\partial a}\sigma_{i}\right)\prod_{i\in\partial a\setminus\partial_{g}a}\nu_{i\rightarrow a}\left(\sigma_{i}\right)\right.\nonumber \\
 & &\times  \left.\prod_{i\in\partial_{g}a}\sigma_{i}\widehat{\nu}_{a\rightarrow i}\left(-\sigma_{i}\right)\right).\label{eq:LC_kappa_a}
\end{eqnarray}
For a complete derivation of the loop series for graphical models
associated with linear codes, we refer the reader to \cite{Vuffray2014}.

\subsection{The Decoding Regime and its BP Fixed-Point}

Note that the loop series, as well as the Bethe free entropy, are
functions of fixed-points of the BP equations \eqref{eq:be_BP_equations}.
The fixed-point associated with the decoding regime is the ferromagnetic
fixed-point

\begin{eqnarray}
\widehat{\nu}_{a\rightarrow i}^{+}\left(\sigma_{i}\right)= & \nu_{i\rightarrow a}^{+}\left(\sigma_{i}\right)= & \frac{1+\sigma_{i}}{2}.\label{eq:LC_Ferro_BP}
\end{eqnarray}
One can easily see that ferromagnetic messages \eqref{eq:LC_Ferro_BP}
satisfy the BP equations \eqref{eq:be_BP_equations} regardless of
the channel considered and of the value of the perturbation parameter
$\eta\in\mathbb{R}$. The ferromagnetic fixed-point \eqref{eq:LC_Ferro_BP}
describes a state for which the most likely configuration is the all-zero
codeword i.e. $\sigma_{i}=+1$. This is the reason why this fixed-point
is associated with the decoding regime.

The Bethe free entropy \eqref{eq:be_Bethe_free_entropy} evaluated
at the ferromagnetic fixed-point simply reads 
\begin{equation}
\phi_{+}^{\text{Bethe}}\left(\Gamma,\underline{s},\eta\right)=\frac{1}{n}\sum_{i\in V}\ln\left(q\left(s_{i}\mid+1\right)\right).\label{eq:LC_ferro_Bethe}
\end{equation}
The factors entering in the weight function \eqref{eq:LC_loop_weight}
are computed using Equations \eqref{eq:LC_kappa_a} for check nodes
\begin{equation}
\kappa_{a}=\begin{cases}
1 & \left|\partial_{g}a\right|\,\text{is\,even}\\
0 & \left|\partial_{g}a\right|\,\text{is\,odd}
\end{cases},\label{eq:LC_kappa_a_ferro}
\end{equation}
and Equation~\eqref{eq:LC_kappa_i} for variable nodes 
\begin{equation}
\kappa_{i}=\begin{cases}
\left(-1\right)^{l}e^{-2\left(\lambda\left(s_{i}\right)+\eta\right)} & \left|\partial_{g}i\right|=l\\
0 & \left|\partial_{g}i\right|<l
\end{cases},\label{eq:LC_kappa_i_ferro}
\end{equation}
where in the last expression we have used the half log-likelihood
variables 
\begin{equation}
\lambda\left(s_{i}\right):=\frac{1}{2}\ln\frac{q\left(s_{i}\mid+1\right)}{q\left(s_{i}\mid-1\right)}.\label{eq:LC_log_likelihood}
\end{equation}
 Based on the expression of the factors \eqref{eq:LC_kappa_a_ferro}
and \eqref{eq:LC_kappa_i_ferro}, the only subgraphs with a non-zero
weight are those with an induced variable-node degree equal to $l$
and even induced check-node degree. This motivates the definition
of the ferromagnetic loops ensemble 
\begin{equation}
\mathcal{L}_{\Gamma}^{+}=\left\{ g\in\mathcal{L}_{\Gamma}\mid\forall i,a\in g,\left|\partial_{g}i\right|=l\text{ and }\left|\partial_{g}a\right|\,\text{is\,even}\right\} .\label{eq:LC_loop_ferro}
\end{equation}
A loop that is not an element of the ferromagnetic ensemble has a
zero weight. Moreover the weight of a ferromagnetic loop is always
non-negative 
\begin{eqnarray}
K_{+}\left(g\right) & = & \exp\left(-2\eta\left|V_{g}\right|-2\sum_{i\in V_{g}}\lambda\left(s_{i}\right)\right)\geq0.\label{eq:LC_weight_ferro}
\end{eqnarray}
In order to see that $K_{+}\left(g\right)$ is non-negative, notice
that a sign is only associated with the factors $\kappa_{i}$ and
is equal to $\left(-1\right)^{l}$. Therefore a loop can only have
a negative weight if the product $l\left|V_{g}\right|$ is odd. Note
that this product is the number of edges in a loop counted from the
variable-node perspective. Therefore it should be equal to the number
of edges counted from the check-node perspective

\begin{equation}
l\left|V_{g}\right|=\sum_{a\in C_{g}}\left|\partial_{g}a\right|.
\end{equation}
Since for a ferromagnetic loop $\left|\partial_{g}a\right|$ is always
even, $l\left|V_{g}\right|$ is also even and the weight of a loop
is always non-negative.

Using Equations~\eqref{eq:LC_loop_correction} and \eqref{eq:LC_ferro_Bethe}
we can express the average free entropy \eqref{eq:fe_free_entropy}
in the simple form

\begin{eqnarray}
\mathbb{E}_{\Gamma,\underline{s}}\left[\phi\left(\Gamma,\underline{s},\eta\right)\right] & = & \mathbb{E}_{\Gamma,\underline{s}}\left[\frac{1}{n}\ln\left(1+\sum_{g\in\mathcal{L}_{\Gamma}^{+}}K_{+}\left(g\right)\right)\right]\nonumber \\
 & &+  \int dsq\left(s\mid1\right)\ln\left(q\left(s\mid1\right)\right).\label{eq:LC_ferro_Bethe_loop_decomposition}
\end{eqnarray}
Note that Equation~\eqref{eq:LC_ferro_Bethe_loop_decomposition}
is valid for all BMS channels regardless of the noise parameter. However
we can only expect that the ferromagnetic loop-series vanishes in
the decoding regime.

\section{First Moment Method on the Loop Series\label{sec:First-Moment-Method}}

We use a first moment method to prove that the ferromagnetic loop-series
in Equation~\eqref{eq:LC_ferro_Bethe_loop_decomposition} vanishes.
In our case it is based on Jensen's inequality and consists of permuting
the expectation over the graph ensemble and the logarithm in Equation~\eqref{eq:LC_ferro_Bethe_loop_decomposition}. 

Note that we cannot permute the expectation over the channel output
realizations and the logarithm. It is easy to see that over the channel
output realizations a loop has an expected weight \eqref{eq:LC_weight_ferro}
that increases exponentially fast for $\eta<0$ 
\begin{equation}
\mathbb{E}_{\underline{s}}\left[K_{+}\left(g\right)\right]=e^{-\eta\left|V_{g}\right|}.
\end{equation}
This is because the loop series is dominated by events for which most
of the bits are corrupted and have negative half log-likelihood \eqref{eq:LC_log_likelihood}.
These events are rare but give rise to an exponentially large weight.

Therefore we estimate the expectation of the loop series over the
ensemble of regular $\left(l,r\right)$ bipartite graphs for a fixed
output realization of the channel.

\subsection{Probability Estimates on Graphs}

For a given channel realization $\underline{s}$ of the BEC (resp.
BSC) call $V_{c}$ the set of variable nodes with $s_{i}=0$ (resp.
$s_{i}=-1$) and call $V_{0}$ the set of variable nodes $i\in V$
with $s_{i}=1$ (resp. $s_{i}=1$). The set $V_{0}$ contains bits
that have been correctly transmitted and $V_{c}$ contains bits that
have been corrupted. We denote the fraction of correctly transmitted
bits by $\left(1-\rho\right)=\left|V_{0}\right|/n$ and we denote
the fraction of corrupted bits by $\rho=\left|V_{c}\right|/n$. We
recall that the total number of variable nodes is $n=\left|V\right|$
and the total number of check nodes is $m=\left|C\right|$.

We decompose the set of ferromagnetic loops \eqref{eq:LC_loop_ferro}
into subsets of loops having the same ``type''. The type of a loop
$g\in\mathcal{L}_{\Gamma}^{+}$ is the triplet $\left(x_{0},x_{c},\underline{y}\right)\in\left[0,1\right]^{2\times\left\lfloor r/2\right\rfloor }$
where $x_{0}=\left|V_{0}\cap V_{g}\right|/n$ is the fraction of correctly
transmitted variable nodes in the loop, $x_{c}=\left|V_{c}\cap V_{g}\right|/n$
is the fraction of corrupted variable nodes in the loop and $\underline{y}=\left\{ y_{t}\right\} _{t=1}^{\left\lfloor r/2\right\rfloor }$
is the fraction of check nodes with degree $2t$. The set of loops
of type $\left(x_{0},x_{c},\underline{y}\right)$ is denoted by $\Omega\left(x_{0},x_{c},\underline{y}\right)$.

Not all value of $\left(x_{0},x_{c},\underline{y}\right)$ are admissible
loop types. The fraction of check nodes inside a loop is upper bounded
by $1$. Moreover counting edges from the variable-node perspective
or from the check-node perspective obviously gives the same number.
Therefore types that are admissible belong to the following set already
introduced in Section \ref{sec:Main-Results}, Eq. \eqref{eq:pr_domain}
\begin{eqnarray}
D\left(\rho\right) & = & \left\{ \left(x_{0},x_{c},\underline{y}\right)\in\left[0,1\right]^{2+\left\lfloor r/2\right\rfloor }\mid\sum_{t=1}^{\left\lfloor r/2\right\rfloor }y_{t}\leq1,\right.\nonumber \\
 &  & \left.\sum_{t=1}^{\left\lfloor r/2\right\rfloor }\frac{2t}{r}y_{t}=\left(1-\rho\right)x_{0}+\rho x_{c}\right\} .\label{eq:fm_domain}
\end{eqnarray}

The weight \eqref{eq:LC_weight_ferro} of a loop $g\in\Omega\left(x_{0},x_{c},\underline{y}\right)$
is only a function of its type $K_{+}\left(g\right)\equiv K_{+}\left(x_{0},x_{c}\right)$.
Using the specific expression of the half log-likelihood \eqref{eq:LC_log_likelihood}
for each channels we find the explicit form of the weight function
for the BEC
\begin{equation}
K_{+}^{\text{BEC}}\left(x_{0},x_{c}\right)=\begin{cases}
\exp\left(-2n\eta x_{c}\rho\right) & x_{0}=0\\
0 & x_{0}>0
\end{cases},\label{eq:fm_weight_type_BEC}
\end{equation}
and for the BSC 
\begin{eqnarray}
K_{+}^{\text{BSC}}\left(x_{0},x_{c}\right) & = & \exp\left(-2n\eta\left(x_{0}\left(1-\rho\right)+x_{c}\rho\right)\right.\nonumber \\
 &  & \left.+nk\left(x_{0},x_{c},\rho,p\right)\right),\label{eq:fm_weight_type_BSC}
\end{eqnarray}
where $k\left(x_{0},x_{c},\rho,p\right)$ is the auxiliary function
introduced in Section~\ref{sec:Main-Results}, Eq. \eqref{eq:pr_k_function}
\begin{equation}
k\left(x_{0},x_{c},\rho,p\right)=\left(\rho x_{c}-\left(1-\rho\right)x_{0}\right)\ln\left(\frac{1-p}{p}\right).
\end{equation}
Therefore the expected value of the loop series over the graph ensemble
can be expressed only through loop types 
\begin{eqnarray}
\mathbb{E}_{\Gamma}\left[\sum_{g\in\mathcal{L}_{\Gamma}^{+}}K_{+}\left(g\right)\right] & = & \sum_{\left(x_{0},x_{c},\underline{y}\right)\in D\left(\rho\right)}K_{+}\left(x_{0},x_{c}\right)\nonumber \\
 & &\times  \mathbb{E}_{\Gamma}\left[\left|\Omega\left(x_{0},x_{c},\underline{y}\right)\right|\right].\label{eq:fe_average_loop_series}
\end{eqnarray}

The expected number of loops with prescribed type $\left(x_{0},x_{c},\underline{y}\right)$
is upper bounded using McKay's combinatorial estimate\footnote{McKay's bound in its original form is only applicable for subgraphs
of size less than $n-4r^{2}$. We refer to \cite{Macris2013} for
a careful analysis.} \cite{mckay2010subgraphs} for subgraphs with specified degrees

\begin{eqnarray}
\mathbb{E}_{\Gamma}\left[\left|\Omega\left(x_{0},x_{c},\underline{y}\right)\right|\right] & \leq & n^{\delta{}_{l,r}}\left(\begin{array}{c}
nl\\
nl\left(x_{0}\left(1-\rho\right)+x_{c}\rho\right)
\end{array}\right)^{-1}\nonumber \\
 & &\times  \left(\begin{array}{c}
n\left(1-\rho\right)\\
nx_{0}\left(1-\rho\right)
\end{array}\right)\left(\begin{array}{c}
n\rho\\
nx_{c}\rho
\end{array}\right)\nonumber \\
 & &\times  \left(\begin{array}{c}
m\\
my_{1},\ldots,my_{\left\lfloor r/2\right\rfloor }
\end{array}\right)\nonumber \\
 & &\times  \prod_{t=1}^{\left\lfloor r/2\right\rfloor }\left(\begin{array}{c}
r\\
2t
\end{array}\right)^{my_{t}},\label{eq:fe_McKay_estimate}
\end{eqnarray}
where $\delta_{l,r}$ is a constant that depends only on $l$ and
$r$. McKay's estimate has the advantage to have an asymptotically
tight growth rate when $n$ goes to infinity.

It remains to prove that the average loop series \eqref{eq:fe_average_loop_series}
with the bound \eqref{eq:fe_McKay_estimate} vanishes in the large
$n$ limit.

\subsection{Laplace's Method and Proof of Theorems}

The loop series \eqref{eq:fe_average_loop_series} is dominated by
loop types that contribute to the sum with the biggest exponential
growth. We apply Laplace's method in order to characterize the biggest
exponent. 

Using Stirling inequalities
\begin{equation}
e^{\frac{1}{12n+1}}\leq\frac{n!}{\sqrt{2\pi n}e^{-n}n^{n}}\leq e^{\frac{1}{12n}},
\end{equation}
we find an asymptotically tight upper bound on the estimate \eqref{eq:fe_McKay_estimate}
\begin{equation}
\mathbb{E}_{\Gamma}\left[\left|\Omega\left(x_{0},x_{c},\underline{y}\right)\right|\right]\leq C_{l,r}n^{\delta'_{l,r}}\exp\left(nf\left(x_{0},x_{c},\underline{y},\rho\right)\right),\label{eq:fm_laplace_mckay}
\end{equation}
where $C_{l,r}$ and $\delta'_{l,r}$ are just numerical constants
and $f\left(x_{0},x_{c},\underline{y},\rho\right)$ is the auxiliary
function introduced in Section \ref{sec:Main-Results}, Eq. \eqref{eq:pr_partity_check}
\begin{eqnarray}
f\left(x_{0},x_{c},\underline{y},\rho\right) & = & -lh_{2}\left(\left(1-\rho\right)x_{0}+\rho x_{c}\right)\nonumber \\
 & &+  \left(1-\rho\right)h_{2}\left(x_{0}\right)+\rho h_{2}\left(x_{c}\right)\nonumber \\
 & &-  \frac{l}{r}\left(1-\sum_{t=1}^{r}y_{t}\right)\ln\left(1-\sum_{t=1}^{r}y_{t}\right)\nonumber \\
 & &-  \frac{l}{r}\sum_{t=1}^{r}y_{t}\ln y_{t}\nonumber \\
 & &+  \frac{l}{r}\sum_{t=1}^{\left\lfloor r/2\right\rfloor }y_{t}\ln\left(\begin{array}{c}
r\\
2t
\end{array}\right).\label{eq:fm_f_function}
\end{eqnarray}
Combining Equations~\eqref{eq:fm_weight_type_BEC}, \eqref{eq:fm_weight_type_BSC}
and \eqref{eq:fm_laplace_mckay}, we show that the leading exponent
in Equation \eqref{eq:fe_average_loop_series} is for the BEC 
\begin{equation}
\alpha^{\text{BEC}}\left(\rho,\eta\right)=\max_{\left(0,x_{c},\underline{y}\right)\in D\left(\rho\right)}f\left(0,x_{c},\underline{y},\rho\right)-2\eta x_{c}\rho,\label{eq:fm_alpha_BEC}
\end{equation}
and is for the BSC
\begin{eqnarray}
\alpha^{\text{BSC}}\left(\rho,\eta\right) & = & \max_{\left(x_{0},x_{c},\underline{y}\right)\in D\left(\rho\right)}\left(-2\eta\left(x_{0}\left(1-\rho\right)+x_{c}\rho\right)\right.\nonumber \\
 & &+  \left.f\left(x_{0},x_{c},\underline{y},\rho\right)+k\left(x_{0},x_{c},\rho,p\right)\right).\label{eq:fm_alpha_BSC}
\end{eqnarray}
 Notice that for all $\rho$ and $\eta$ the exponent $\alpha^{\text{BEC/BSC}}\left(\rho,\eta\right)$
is non-negative. This is easily verified by evaluating the objective
function at $\left(x_{0},x_{c},\underline{y}\right)=\left(0,0,0\right)$.
Therefore the bit-error probability vanishes if $\alpha^{\text{BEC/BSC}}\left(\rho,\eta\right)$
is equal to zero for all $\eta$ in a neighborhood of zero. The next
Lemma shows that in fact only the maximization at $\eta=0$ is important.
\begin{lem}
\label{lem:fm_eta_zero}If the maximum of \eqref{eq:fm_alpha_BEC}
(resp. \eqref{eq:fm_alpha_BSC}) is uniquely achieved in $\left(x_{0},x_{c},\underline{y}\right)=\left(0,0,0\right)$
for $\eta=0$, then there exists $\widetilde{\eta}<0$ such that $\alpha^{\text{BEC}}\left(\rho,\eta\right)=0$
(resp. $\alpha^{\text{BSC}}\left(\rho,\eta\right)=0$) for all $\eta\in\left]\widetilde{\eta},\infty\right[$. \end{lem}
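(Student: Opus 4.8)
The plan is to exploit the common structure of the two exponents. Both \eqref{eq:fm_alpha_BEC} and \eqref{eq:fm_alpha_BSC} can be written as
\[
\alpha\left(\rho,\eta\right)=\max_{v\in D\left(\rho\right)}\left\{ F\left(v\right)-2\eta\,L\left(v\right)\right\},
\]
where $v=\left(x_{0},x_{c},\underline{y}\right)$, where $F$ is the objective appearing in the hypothesis (that is $f\left(0,x_{c},\underline{y},\rho\right)$ for the BEC and $f+k$ for the BSC), and where $L\left(v\right)\geq0$ is the linear magnetization term, namely $L=\rho x_{c}$ for the BEC and $L=\left(1-\rho\right)x_{0}+\rho x_{c}$ for the BSC. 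Two elementary observations organize the argument. First, $F$ and $L$ both vanish at the origin, and the second constraint in \eqref{eq:fm_domain} reads $\sum_{t}\frac{2t}{r}y_{t}=L\left(v\right)$, so that $L\left(v\right)=0$ forces $v=\left(0,0,0\right)$. Second, for $\eta\geq0$ the perturbation $-2\eta L$ is nonpositive, hence $\alpha\left(\rho,\eta\right)\leq\max F=0$, while evaluating the objective at the origin gives $\alpha\left(\rho,\eta\right)\geq0$; thus $\alpha\left(\rho,\eta\right)=0$ already holds for every $\eta\geq0$. All the content therefore lies in the negative range of $\eta$, and I claim it is enough to produce a constant $c>0$ with $F\left(v\right)\leq-c\,L\left(v\right)$ on all of $D\left(\rho\right)$. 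Indeed, setting $\widetilde{\eta}=-c/2$, for any $\eta\in\left]\widetilde{\eta},0\right[$ one has $F\left(v\right)-2\eta L\left(v\right)\leq\left(2\left|\eta\right|-c\right)L\left(v\right)\leq0$ with equality at the origin, so $\alpha\left(\rho,\eta\right)=0$ throughout $\left]\widetilde{\eta},\infty\right[$.

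To establish $F\leq-c\,L$, equivalently $-F\left(v\right)/L\left(v\right)\geq c$ for $v\neq\left(0,0,0\right)$, I would split $D\left(\rho\right)$ into a small ball $B_{\delta}$ around the origin and its complement. On the compact set $D\left(\rho\right)\setminus B_{\delta}$ the hypothesis that $F$ attains its maximum \emph{only} at the origin, together with continuity, yields $F\leq-c_{1}$ for some $c_{1}>0$; since $L$ is bounded above on $D\left(\rho\right)$ by some $M$, the ratio $-F/L$ is at least $c_{1}/M$ there.

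The delicate region, and the main obstacle, is the neighborhood $B_{\delta}$, where both $-F$ and $L$ tend to zero and the hypothesis by itself provides no quantitative rate. Here I would expand $F$ to leading order in the coordinates. Writing $w=L\left(v\right)=\left(1-\rho\right)x_{0}+\rho x_{c}$, the entropy terms of $f$ each produce a contribution of order $w\ln\left(1/w\right)$: the term $-lh_{2}\left(w\right)$ contributes with coefficient $-l$, the single-variable entropies $\left(1-\rho\right)h_{2}\left(x_{0}\right)+\rho h_{2}\left(x_{c}\right)$ with coefficient $+1$, and the check-node entropies, after the constraint $\sum_{t}\frac{2t}{r}y_{t}=w$ is used to optimize over $\underline{y}$ (the optimum concentrates on degree-two checks, $y_{1}\approx rw/2$), with coefficient $+l/2$; the remaining pieces of $f$, as well as $k$ and the perturbation $-2\eta L$, are only of order $w$. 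The heart of the matter is that the net coefficient of $w\ln\left(1/w\right)$ works out to $-\left(l-2\right)/2$, which is strictly negative precisely because $l\geq3$. This gives $-F\left(v\right)\geq c_{2}\,w\ln\left(1/w\right)\geq c_{2}\,w=c_{2}\,L\left(v\right)$ for $v$ in a sufficiently small punctured ball (using $w\ln\left(1/w\right)\geq w$ for $w\leq1/e$), so $-F/L$ is bounded below near the origin as well.

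Taking $c=\min\left(c_{1}/M,c_{2}\right)$ then yields the global bound $F\leq-c\,L$ and, by the reduction of the first paragraph, the desired $\widetilde{\eta}=-c/2$. The principal difficulty is the leading-order analysis near the origin, and in particular carrying out the constrained optimization over $\underline{y}$ carefully enough to confirm the sign of the $w\ln\left(1/w\right)$ coefficient; this is exactly the step where the standing assumption $l\geq3$ becomes essential.
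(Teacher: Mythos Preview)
Your proposal is correct and follows essentially the same approach as the paper's proof: both split the domain $D(\rho)$ into a neighborhood of the origin and its complement, use compactness together with the unique-maximizer hypothesis on the complement, and near the origin exploit that the leading $w\ln(1/w)$ coefficient of the (upper-bounded) objective is $-(l-2)/2<0$ for $l\geq 3$ (the paper phrases this as $\frac{d}{dX}\overline{g}^{\text{BSC}}(X)\sim(\tfrac{l}{2}-1)\ln X$, which is the same computation). Your reduction to a single inequality $F\leq -cL$ on all of $D(\rho)$, with $\widetilde{\eta}=-c/2$, is a slightly cleaner packaging of the same two-region argument than the paper's choice of separate thresholds $\widetilde{\eta}_1,\widetilde{\eta}_2$.
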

\begin{proof}
See Appendix \ref{app:Proof-of-Lemma-eta}
\end{proof}
In order to prove Theorems~\ref{thm:PR_concentration_error_BEC}
and \ref{thm:PR_concentration_error_BSC}, we need to show that small
variations around $\rho$ do not change $\alpha^{\text{BEC}}\left(\rho,0\right)$
and $\alpha^{\text{BSC}}\left(\rho,0\right)$. This is guaranteed
by the following Lemma.
\begin{lem}
\label{lem:fm_typical_sequence_optimization} For all $\rho\in\left[0,1\right]$,
if $\alpha^{\text{BEC}}\left(\rho,0\right)=0$ (resp. $\alpha^{\text{BSC}}\left(\rho,0\right)=0$)
and the maximum of \eqref{eq:fm_alpha_BEC} (resp. \eqref{eq:fm_alpha_BSC})
is uniquely achieved at $\left(x_{0},x_{c},\underline{y}\right)=\left(0,0,0\right)$,
there exists $N$ sufficiently large such that 
\[
\forall n\geq N,\,\forall\delta\in\left[-\sqrt{\frac{\ln n}{n}},\sqrt{\frac{\ln n}{n}}\right],\,\alpha^{\text{BEC/BSC}}\left(\rho+\delta,0\right)=0
\]
\end{lem}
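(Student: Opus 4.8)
The plan is to prove the slightly stronger statement that the zero set $\{\rho':\alpha^{\text{BEC/BSC}}(\rho',0)=0\}$ contains an open interval around the given $\rho$; since $\sqrt{\ln n/n}\to 0$, the shrinking window $[-\sqrt{\ln n/n},\sqrt{\ln n/n}]$ is contained in that interval once $n\geq N$. Abbreviate $\underline{x}=(x_0,x_c,\underline{y})$ and let $F(\underline{x},\rho)$ be the function maximised in \eqref{eq:fm_alpha_BEC} (resp. \eqref{eq:fm_alpha_BSC}) at $\eta=0$, i.e. $f(x_0,x_c,\underline{y},\rho)$ (resp. $f+k$). Since the origin lies in $D(\rho')$ for every $\rho'$ and $F(0,\rho')=0$, we have $\alpha^{\text{BEC/BSC}}(\rho',0)\geq 0$ automatically, so it is enough to prove the reverse bound $F(\underline{x},\rho')\leq 0$ for all $\underline{x}\in D(\rho')$ and all $\rho'$ near $\rho$. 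I split $D(\rho')$ into a bulk region $\|\underline{x}\|\geq r_0$ and a region near the origin, and treat them separately.

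In the bulk I use the hypothesis. By uniqueness of the maximiser, $F(\cdot,\rho)<0$ on $D(\rho)\setminus\{0\}$, so compactness gives $F(\cdot,\rho)\leq -c$ on the slab $\{\|\underline{x}\|\geq r_0\}\cap D(\rho)$ for some $c>0$. The only $\rho$-dependence of the feasible set sits in the single linear constraint $\sum_t\frac{2t}{r}y_t=(1-\rho)x_0+\rho x_c$ of \eqref{eq:fm_domain}; hence any $\underline{x}\in D(\rho+\delta)$ can be transported to a point of $D(\rho)$ at distance $O(\delta)$ by adjusting one coordinate. On the slab $F$ is smooth (every argument of a logarithm or of $h_2$ stays away from $0$ and $1$), so uniform continuity yields $F(\underline{x},\rho+\delta)\leq -c/2<0$ on $\{\|\underline{x}\|\geq r_0\}\cap D(\rho+\delta)$ for all $|\delta|$ small.

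Near the origin $F$ degenerates to $0$ and is no longer smooth, the dominant terms being the entropic contributions of the form $-z\ln z$; this is the delicate region. The plan is to extract the leading behaviour as $\underline{x}\to 0$. Maximising the check-node entropy $-\frac{l}{r}\sum_t y_t\ln y_t$ under the constraint forces the mass onto the lowest degree $t=1$, producing a leading term $-\tfrac{l}{2}\,w\ln w$ with $w=(1-\rho)x_0+\rho x_c$; the variable-node terms contribute, after using convexity of $z\mapsto z\ln z$ to bound $(1-\rho)x_0\ln x_0+\rho x_c\ln x_c\geq w\ln w$, a leading term whose surviving coefficient of $w\ln w$ is the strictly positive number $(l/2-1)$ as soon as $l\geq 3$. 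Since $w\ln w<0$ for small $w$, this shows $F(\underline{x},\rho')<0$ for $0<\|\underline{x}\|\leq r_0$. Crucially this coefficient does not depend on $\rho'$ at leading order (the $\rho'$-dependence enters only through bounded factors such as $\ln\rho'$ and the linear term $k$), so the negativity is automatically stable under $\rho\mapsto\rho+\delta$ and, in fact, does not even use the hypothesis.

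Combining the two regions and shrinking $r_0,\delta_0$ if needed, there is $\delta_0>0$ with $F(\underline{x},\rho+\delta)\leq 0$ for every $\underline{x}\in D(\rho+\delta)$ whenever $|\delta|\leq\delta_0$; hence $\alpha^{\text{BEC/BSC}}(\rho+\delta,0)=0$. Picking $N$ with $\sqrt{\ln N/N}\leq\delta_0$ gives the statement for all $n\geq N$. The BSC case runs identically: the extra term $k(x_0,x_c,\rho,p)$ of \eqref{eq:pr_k_function} is linear in $(x_0,x_c)$ and jointly continuous in $(\underline{x},\rho)$, so it neither affects the bulk compactness argument nor the leading $z\ln z$ balance near the origin. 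I expect the near-origin leading-order expansion---identifying the worst-case check-degree profile and verifying via convexity that the surviving $z\ln z$ coefficient is strictly negative for $l\geq3$---to be the main technical obstacle.
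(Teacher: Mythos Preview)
Your proposal is correct and follows essentially the same two-region argument as the paper's proof: the $\rho$-independent entropic bound near the origin with leading coefficient $(l/2-1)$ matches the paper's explicit upper bound $\overline{g}^{\text{BSC}}(X)$ from Appendix~\ref{app:Proof-of-Lemma-eta}, and the bulk is handled by strict negativity at $\rho$ plus a small-perturbation argument (the paper does this via the change of variables $(x_0,x_c)\mapsto(X,x_c)$, which makes the domain constraint $\rho$-free and the residual $\rho$-dependence explicitly Lipschitz, rather than by transport and uniform continuity). One small inaccuracy: $F$ is not smooth on the slab $\{\|\underline{x}\|\geq r_0\}$ since individual coordinates can still touch $0$ or $1$, but continuity on a compact set already yields the uniform continuity your argument actually uses.
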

\begin{proof}
See Appendix \ref{app:Proof-of-Lemma-typical}
\end{proof}
We are now in position to prove our main theorems.
\begin{proof}[Proof of Theorem \ref{thm:PR_concentration_error_BEC}]
\begin{singlespace}
 Let $\epsilon$ be the probability of error of the BEC. First notice
that the perturbed partition function \eqref{eq:fe_perturbed_partition_function}
is trivially lower bounded by $1$ and upper bounded by $2^{n}e^{2n\left|\eta\right|}$.
This implies that the free entropy \eqref{eq:fe_free_entropy} remains
finite
\begin{equation}
0\leq\phi\left(\Gamma,\underline{s},\eta\right)\leq\ln2+2\left|\eta\right|.
\end{equation}
Therefore using Equation~\eqref{eq:LC_ferro_Bethe_loop_decomposition}
and the fact that $K_{+}\left(g\right)\geq0$ we see that the loop
series remains finite as well 
%\begin{eqnarray}
%2\left(\ln2+\left|\eta\right|\right) & \geq & \left|\mathbb{E}_{\underline{s}}\left[\phi\left(\Gamma,\underline{s},\eta\right)\right]-\int dsq\left(s\mid1\right)\ln\left(q\left(s\mid1\right)\right)\right|\nonumber \\
% & = & \mathbb{E}_{\underline{s}}\left[\frac{1}{n}\ln\left(1+\sum_{g\in\mathcal{L}_{\Gamma}^{+}}K_{+}\left(g\right)\right)\right].\label{eq:fm_loop_series_bound}
%\end{eqnarray}
\begin{eqnarray}
2\left(\ln2+\left|\eta\right|\right) & \geq & \left|\mathbb{E}_{\underline{s}}\left[\phi\left(\Gamma,\underline{s},\eta\right)\right]\right.
\nonumber \\
&& - \left.\int dsq\left(s\mid1\right)\ln\left(q\left(s\mid1\right)\right)\right|\nonumber \\
 & = & \mathbb{E}_{\underline{s}}\left[\frac{1}{n}\ln\left(1+\sum_{g\in\mathcal{L}_{\Gamma}^{+}}K_{+}\left(g\right)\right)\right].\label{eq:fm_loop_series_bound}
\end{eqnarray}
\end{singlespace}

Let $A$ be the following probabilistic event on the channel output
realizations
\begin{equation}
A:=\left\{ \underline{s}\in\left\{ -1,0,1\right\} ^{n}\mid\left|\frac{1}{n}\sum_{i=1}^{n}s_{i}-(1-\epsilon)\right|\leq\sqrt{\frac{\ln n}{n}}\right\} .
\end{equation}
Output realizations in $A$ are close to the average output realization.

Using Hoeffding's inequality, we see that the probability of the complementary
event $A^{c}$ vanishes 
\begin{equation}
\mathbb{P}_{\underline{s}}\left[A^{c}\right]\leq\frac{2}{n^{-2}}.
\end{equation}
Combining Jensen's inequality and the trivial bound \eqref{eq:fm_loop_series_bound}
on the loop series we have the following estimate
\begin{eqnarray}
\mathbb{E}_{\Gamma,\underline{s}}\left[\frac{1}{n}\ln\left(1+\sum_{g\in\mathcal{L}_{\Gamma}^{+}}K_{+}\left(g\right)\right)\right]\leq\frac{4}{n^{-2}}\left(\ln2+\left|\eta\right|\right)\nonumber \\
+\mathbb{E}_{\underline{s}}\left[\frac{1}{n}\ln\left(1+\mathbb{E}_{\Gamma}\left[\sum_{g\in\mathcal{L}_{\Gamma}^{+}}K_{+}\left(g\right)\right]\right)\mid A\right].\label{eq:fm_conditionned_expectation}
\end{eqnarray}
Since we have conditioned over channel output realizations that are
in $A$, the fraction of corrupted bit is $\left|\rho-\epsilon\right|\leq\sqrt{\ln n/n}$.
Therefore combining Equation~\eqref{eq:fm_laplace_mckay}, Lemma~\ref{lem:fm_eta_zero}
and Lemma~\ref{lem:fm_typical_sequence_optimization} we have that
if $\alpha^{\text{BEC}}\left(\epsilon,0\right)=0$ is uniquely achieved
in $\left(x_{0},x_{c},\underline{y}\right)=\left(0,0,0\right)$ then
for all $\eta\in\left]\widetilde{\eta},\infty\right[$ and $n$ sufficiently
large,
\begin{eqnarray}
\mathbb{E}_{\underline{s}}\left[\frac{1}{n}\ln\left(1+\mathbb{E}_{\Gamma}\left[\sum_{g\in\mathcal{L}_{\Gamma}^{+}}K_{+}\left(g\right)\right]\right)\mid A\right]\leq\nonumber \\
\frac{1}{n}\ln\left(1+c_{3}n^{c_{4}}\right),
\end{eqnarray}
where $c_{3}$ and $c_{4}$ are numerical constants independent of
$n$.

We have proved that for all $\eta\in\left]\widetilde{\eta},\infty\right[$
with $\widetilde{\eta}<0$ the average free entropy converges in expectation
over the regular $\left(l,r\right)$ LDPC ensemble 

\begin{equation}
\lim_{n\rightarrow\infty}\mathbb{E}_{\Gamma}\left[\left|\mathbb{E}_{\underline{s}}\left[\phi\left(\Gamma,\underline{s},\eta\right)\right]-\int dsq\left(s\mid1\right)\ln\left(q\left(s\mid1\right)\right)\right|\right]=0.
\end{equation}
In particular it implies that the average free entropy over the LDPC
ensemble converges 
\begin{equation}
\lim_{n\rightarrow\infty}\mathbb{E}_{\Gamma,\underline{s}}\left[\phi\left(\Gamma,\underline{s},\eta\right)\right]=\int dsq\left(s\mid1\right)\ln\left(q\left(s\mid1\right)\right).
\end{equation}
Since $\mathbb{E}_{\Gamma,\underline{s}}\left[\phi\left(\Gamma,\underline{s},\eta\right)\right]$
is a convex function of $\eta$ and converges pointwise in a neighborhood
of zero, we can exchange the limit and the derivative
\begin{eqnarray}
0 & = & \left.\frac{\partial}{\partial\eta}\lim_{n\rightarrow\infty}\mathbb{E}_{\Gamma,\underline{s}}\left[\phi\left(\Gamma,\underline{s},\eta\right)\right]\right|_{\eta=0}\nonumber \\
 & = & \lim_{n\rightarrow\infty}\left.\frac{\partial}{\partial\eta}\mathbb{E}_{\Gamma,\underline{s}}\left[\phi\left(\Gamma,\underline{s},\eta\right)\right]\right|_{\eta=0}\nonumber \\
 & = & \lim_{n\rightarrow\infty}\left.\mathbb{E}_{\Gamma}\left[\frac{\partial}{\partial\eta}\mathbb{E}_{\underline{s}}\left[\phi\left(\Gamma,\underline{s},\eta\right)\right]\right]\right|_{\eta=0}\nonumber \\
 & = & -2\lim_{n\rightarrow\infty}\mathbb{E}_{\Gamma}\left[P_{\Gamma}^{\text{bit-sampling}}\right],
\end{eqnarray}
where in the last line we use Equation~\eqref{eq:fe_free_entropy_bit_error}
that relates the free entropy to the bit-error probability.
\end{proof}
Theorem~\ref{thm:PR_concentration_error_BSC} has a proof almost
identical to that of Theorem~\ref{thm:PR_concentration_error_BEC}.

\section{Path Forward\label{sec:Path-Forward}}

We would like to stress that the techniques developed in this paper
are quite general. In particular they do not rely on a special form
of channels or on the regular-degree distribution of the LDPC ensemble.
Therefore we plan to improve our results in the following ways.

\subsection{Generalization to Arbitrary Degree Distributions}

The entire analysis can easily be extended to general degree distributions
with bounded degrees. It will simply transform the function \eqref{eq:fm_f_function}
that counts subgraphs into a more convoluted object. However extending
our results to distributions with unbounded degrees, like for instance
Poisson distributions, may be more complicated. One would have to
derive an estimate for counting subgraphs in this particular case.

\subsection{Asymmetric Channels}

The loop series and the Bethe free entropy for general channels are
almost exactly similar than for symmetric channels. For general channels
we can no longer assume that the all-zero codeword is transmitted.
Instead we have to average the bit-error probability over all possible
input codewords $\underline{\tau}$. In this case the weight of a
loop remains similar than for symmetric channels. The weight is also
non-negative and depends on the generalized half log-likelihood ratio
\begin{equation}
\lambda\left(s_{i}\mid\tau_{i}\right)=\frac{1}{2}\log\frac{q\left(s_{i}\mid\tau_{i}\right)}{q\left(s_{i}\mid-\tau_{i}\right)},
\end{equation}
where $\underline{s}$ denotes as usual the channel observations.
In order to control the loop series, we will need to perform a conditioned
expectation in \eqref{eq:fm_conditionned_expectation} over joint
typical sequences of input codewords and noise realizations.

\subsection{Tight Thresholds}

As described in Section \ref{sec:Main-Results}, the thresholds that
we obtain are not tight. In fact at fixed rate they become worse and
converge to zero as the degrees of the graph become large. The reason
why we obtain such loose bounds for large degrees comes from the function
$f\left(x_{0},x_{c},\underline{y},\rho\right)$ defined in \eqref{eq:fm_f_function}.
This function counts the growing rate of the average number of subgraphs
with a prescribed type $\left(x_{0},x_{c},\underline{y}\right)$ 
\begin{equation}
f=\lim_{n\rightarrow\infty}\frac{1}{n}\ln\left(\mathbb{E}_{\Gamma}\left[\left|\Omega\left(x_{0},x_{c},\underline{y}\right)\right|\right]\right).\label{eq:pf_f}
\end{equation}
One can verify that if instead of $f$ we use the function
\begin{equation}
\widetilde{f}=\lim_{n\rightarrow\infty}\frac{1}{n}\mathbb{E}_{\Gamma}\left[\ln\left(\left|\Omega\left(x_{0},x_{c},\underline{y}\right)\right|\right)\right],\label{eq:pf_f_tilde}
\end{equation}
we obtain tight lower and upper bound on the threshold for vanishing
bit-error probability. 

The function $\widetilde{f}$ only depends on the random graph ensembles
that we consider and does not depend on a particular channel. Computing
this function would provide a proof of the exact location of the MAP
threshold for an extensive class of channels. However this computation
could prove to be a very difficult task.

A way around the problem of computing \eqref{eq:pf_f_tilde} is to
condition the expectation \eqref{eq:pf_f} on some rare events with
respect to the random graph measure. Note that by Jensen's inequality
$\widetilde{f}$ is always upper-bounded by $f$. This is because
the expectation \eqref{eq:pf_f} is dominated by rare events that
are associated with a large weight $\left|\Omega\left(x_{0},x_{c},\underline{y}\right)\right|$.
Conditioning on these rare events will lead to better estimates of
\eqref{eq:pf_f_tilde} and will provide tighter bounds at least in
the limit of large degrees.

\appendices{}

\section{Proof of Lemma~\ref{lem:fm_eta_zero}\label{app:Proof-of-Lemma-eta}}
\begin{proof}
We prove Lemma~\ref{lem:fm_eta_zero} only for the BSC (the proof
for the BEC is almost identical). For a given $\rho$ and $p$, let
us define the following function
\begin{eqnarray}
g^{\text{BSC}}\left(x_{0},x_{c},\underline{y},\eta\right) & = & f\left(x_{0},x_{c},\underline{y},\rho\right)+k\left(x_{0},x_{c},\rho,p\right)\nonumber \\
 & &-  2\eta\left(x_{0}\left(1-\rho\right)+x_{c}\rho\right).\label{eq:app_g_BSC}
\end{eqnarray}
The function $g^{\text{BSC}}\left(x_{0},x_{c},\underline{y},\eta\right)$
corresponds to the exponent of the loop series \eqref{eq:fe_average_loop_series}
associated with the loop type $(x_{0},x_{c},\underline{y})$. In order
to prove Lemma~\ref{lem:fm_eta_zero}, we have to find $\widetilde{\eta}<0$
such that $g^{\text{BSC}}$ is non-positive on $D(\rho)\times\left[\widetilde{\eta},+\infty\right[$. 

We first show that for any $\widetilde{\eta}_{1}<0$, there exists
a neighborhood $U$ of $(x_{o},x_{c},\underline{y})=(0,0,0)$ such
that $g^{\text{BSC}}$ is non-positive on $U\cap D(\rho)\times\left[\widetilde{\eta}_{1},+\infty\right[$.
For a fixed $\widetilde{\eta}_{1}<0$ we construct a function $\overline{g}^{\text{BSC}}$
that is an upper bound of $g^{\text{BSC}}$. We restrict ourselves
to the domain $V\cap D(\rho)\times\left[\widetilde{\eta}_{1},+\infty\right[$,
where $V=\mathbb{B}(0,1/3r)$ is the ball of radius $1/3r$ centered
at $(0,0,0)$.

Let us explicitly write down the function \eqref{eq:app_g_BSC} term
by term

\begin{eqnarray}
g^{\text{BSC}}\left(x_{0},x_{c},\underline{y},\eta\right) & = & -2\eta\left(x_{0}\left(1-\rho\right)+x_{c}\rho\right)\nonumber \\
 & &+  \left(\rho x_{c}-\left(1-\rho\right)x_{0}\right)\ln\left(\frac{1-p}{p}\right)\nonumber \\
 & &+  \frac{l}{r}\sum_{t=1}^{\left\lfloor r/2\right\rfloor }y_{t}\ln\left(\begin{array}{c}
r\\
2t
\end{array}\right)\nonumber \\
 & &-  lh_{2}\left(\left(1-\rho\right)x_{0}+\rho x_{c}\right)\nonumber \\
 & &+  \left(1-\rho\right)h_{2}\left(x_{0}\right)+\rho h_{2}\left(x_{c}\right)\nonumber \\
 & &-  \frac{l}{r}\left(1-\sum_{t=1}^{r}y_{t}\right)\ln\left(1-\sum_{t=1}^{r}y_{t}\right)\nonumber \\
 & &-  \frac{l}{r}\sum_{t=1}^{r}y_{t}\ln y_{t}.\label{eq:app_g_function}
\end{eqnarray}
We bound each term of \eqref{eq:app_g_function} separately. Denote
the fraction of variable nodes in the loop by $X=x_{0}\left(1-\rho\right)+x_{c}\rho$.
The inequalities below trivially hold 
\begin{eqnarray}
\left(\rho x_{c}-\left(1-\rho\right)x_{0}\right)\ln\left(\frac{1-p}{p}\right) & \leq & 2\ln\left(\frac{1-p}{p}\right)X\nonumber \\
\frac{l}{r}\sum_{t=1}^{\left\lfloor r/2\right\rfloor }y_{t}\ln\left(\begin{array}{c}
r\\
2t
\end{array}\right) & \leq & l\ln\left(\begin{array}{c}
r\\
2\left\lfloor r/2\right\rfloor 
\end{array}\right)X\nonumber \\
-2\eta\left(x_{0}\left(1-\rho\right)+x_{c}\rho\right) & \leq & -2\widetilde{\eta}_{1}X.\label{eq:app_upb_1}
\end{eqnarray}
As the entropy is a concave function, we have the following inequality
\begin{equation}
\left(1-\rho\right)h_{2}\left(x_{0}\right)+\rho h_{2}\left(x_{c}\right)\leq h_{2}(X).\label{eq:app_upb_2}
\end{equation}
Concativty of $-x\ln x$ gives us
\begin{eqnarray}
-\sum_{t=1}^{\left\lfloor r/2\right\rfloor }y_{t}\ln y_{t} & \leq & -\left(\sum_{t=1}^{\left\lfloor r/2\right\rfloor }y_{t}\right)\ln\left(\frac{1}{\left\lfloor r/2\right\rfloor }\sum_{t=1}^{\left\lfloor r/2\right\rfloor }y_{t}\right)\nonumber \\
 & \leq & -\left(\sum_{t=1}^{\left\lfloor r/2\right\rfloor }y_{t}\right)\ln\left(\sum_{t=1}^{\left\lfloor r/2\right\rfloor }y_{t}\right)\nonumber \\
 &  & +r\ln\left(\left\lfloor r/2\right\rfloor \right)X.\label{eq:app_upb_3}
\end{eqnarray}
Note that since the domain is restricted to types in a ball of radius
$1/3r$, the fraction of variable nodes in a loop is upper-bounded
$X\leq1/3r$. In particular it implies that 
\begin{eqnarray}
\sum_{t=1}^{\left\lfloor r/2\right\rfloor }y_{t} & \leq & \frac{r}{2}\left(\sum_{t=1}^{\left\lfloor r/2\right\rfloor }\frac{2t}{r}y_{t}\right)\nonumber \\
 & = & \frac{r}{2}X\nonumber \\
 & \leq & \frac{1}{6}\nonumber \\
 & \leq & \frac{1}{e},
\end{eqnarray}
where $e$ is the Euler constant. Finally as the entropy is increasing
on $\left[0,\frac{1}{e}\right]$, we have
\begin{eqnarray}
\frac{l}{r}h_{2}\left(\sum_{t=1}^{\left\lfloor r/2\right\rfloor }y_{t}\right) & \leq & \frac{l}{r}h_{2}\left(\frac{r}{2}X\right).\label{eq:app_upb_4}
\end{eqnarray}

The upper bound on the function \eqref{eq:app_g_function} is simply
the sum of Inequalities \eqref{eq:app_upb_1}, \eqref{eq:app_upb_2},
\eqref{eq:app_upb_3} and depends only on the fraction of variable
nodes in a loop i.e. $\overline{g}^{\text{BSC}}\left(x_{0},x_{c},\underline{y},\eta\right)\equiv\overline{g}^{\text{BSC}}(X)$ and 
\begin{eqnarray}
\overline{g}^{\text{BSC}}(X) & = & \frac{l}{r}h_{2}\left(\frac{r}{2}X\right)-(l-1)h_{2}\left(X\right)+MX,
\end{eqnarray}
where $M$ is a constant independent of $\eta$ and $\rho$ 
\begin{equation}
M=2\ln\left(\frac{1-p}{p}\right)+l\ln\left(\begin{array}{c}
r\\
2\left\lfloor r/2\right\rfloor 
\end{array}\right)+l\ln\left(\left\lfloor r/2\right\rfloor \right)-2\widetilde{\eta}_{1}.
\end{equation}
Notice that $\overline{g}^{\text{BSC}}(0)=0$ and that the derivative
$\frac{d}{dX}\overline{g}^{\text{BSC}}(X)$ behaves like $\left(\frac{l}{2}-1\right)\ln X$
in the neighborhood of $0$. Hence, for $l\geq3$, there exists $\delta>0$
such that $\overline{g}^{\text{BSC}}$ is negative on $\left]0,\delta\right]$.
Therefore for all types $(x_{o},x_{c},\underline{y})\in D(\rho)$
in the domain $U=\mathbb{B}(0,\delta)\cap\mathbb{B}(0,1/3r)$ and
for all $\eta\in\left[\widetilde{\eta}_{1},+\infty\right[$ we have

\begin{eqnarray}
g^{\text{BSC}}\left(x_{0},x_{c},\underline{y},\eta\right) & \leq & \overline{g}^{\text{BSC}}(X)\nonumber \\
 & \leq & 0.
\end{eqnarray}

By hypothesis the maximum of \eqref{eq:fm_alpha_BSC} is uniquely
achieved in $(0,0,0)$ for $\eta=0$. It implies that there exists
$\lambda<0$ such that 
\begin{equation}
\max_{\left(x_{0},x_{c},\underline{y}\right)\in D\left(\rho\right)\setminus U}f\left(x_{0},x_{c},\underline{y},\rho\right)+k\left(x_{0},x_{c},\rho,p\right)=\lambda.\label{eq:maxawayzero}
\end{equation}
Therefore for $\eta>\widetilde{\eta}_{2}=\lambda/2$ 
\begin{equation}
\max_{\left(x_{0},x_{c},\underline{y}\right)\in D\left(\rho\right)\setminus U}g^{\text{BSC}}\left(x_{0},x_{c},\underline{y},\eta\right)\leq\lambda-2\widetilde{\eta}_{2}=0.
\end{equation}

We see that $\widetilde{\eta}=\max(\widetilde{\eta}_{1},\widetilde{\eta}_{2})<0$
satisfies by construction the condition of Lemma~\ref{lem:fm_eta_zero}.
\end{proof}

\section{Proof of Lemma~\ref{lem:fm_typical_sequence_optimization}\label{app:Proof-of-Lemma-typical}}
\begin{proof}
We prove Lemma~\ref{lem:fm_typical_sequence_optimization} only for
the BSC (the proof for the BEC is almost identical). For a given $\rho$
and $p$, we recall the function $g_{p,\rho}^{\text{BSC}}\equiv g^{\text{BSC}}$
and $\overline{g}_{p,\rho}^{\text{BSC}}\equiv\overline{g}^{\text{BSC}}$
as defined in Appendix \ref{app:Proof-of-Lemma-eta}. We prove that
for $n$ sufficiently large and for all $\delta\in\left[-\sqrt{n^{-1}\ln n},\sqrt{n^{-1}\ln n}\right]$,
the function $g_{p,\rho+\delta}^{\text{BSC}}$ is still non-positive
on $D(\rho)$. 

First notice that the upper bound $\overline{g}_{p,\rho}^{\text{BSC}}$
does not depend on $\rho$. Using the same argument as in Appendix
\ref{app:Proof-of-Lemma-eta}, there exists a neighborhood $U$ of
$(0,0,0)$ such that for all type $(x_{0},x_{c},\underline{y})\in U\cap D(\rho+\delta)$
and for all $\delta\in\left[-\sqrt{n^{-1}\ln n},\sqrt{n^{-1}\ln n}\right]$
\begin{eqnarray}
g_{p,\rho+\delta}^{\text{BSC}}\left(x_{0},x_{c},\underline{y},0\right) & \leq & \overline{g}_{p,\rho}^{\text{BSC}}(X)\nonumber \\
 & \leq & 0.
\end{eqnarray}

It remains to show that the variation of $g^{\text{BSC}}$ on $D\left(\rho+\delta\right)\setminus U$
is bounded. Let us make the change of variables $\left(x_{0},x_{c}\right)\rightarrow\left(X,x_{c}\right)$
and $g_{p,\rho+\delta}^{\text{BSC}}\left(x_{0},x_{c},\underline{y},0\right)\rightarrow g_{p,\rho+\delta}^{\text{BSC}}\left(X,x_{c},\underline{y},0\right)$.
The following inequality holds
\begin{eqnarray}
g_{p,\rho+\delta}^{\text{BSC}}\left(X,x_{c},\underline{y},0\right) & \leq & 2\sqrt{\frac{\ln n}{n}}\left(\ln2+\ln\left(\frac{1-p}{p}\right)\right)\nonumber \\
 & &+  g_{p,\rho}^{\text{BSC}}\left(X,x_{c},\underline{y},0\right).
\end{eqnarray}
Hence we can bound the maximum of $g^{\text{BSC}}$ on $D(\rho+\delta)\setminus U$
by 
\begin{eqnarray}
\max_{(X,x_{c},\underline{y})\in D(\rho+\delta)\setminus U} & g_{p,\rho+\delta}^{\text{BSC}}\left(X,x_{c},\underline{y},0\right) & \leq\nonumber \\
\max_{(X,x_{c},\underline{y})\in D(\rho)\setminus U} & g_{p,\rho}^{\text{BSC}}\left(X,x_{c},\underline{y},0\right) & +c\sqrt{\frac{\ln n}{n}}.
\end{eqnarray}
The maximum of $g_{p,\rho}^{\text{BSC}}\left(X,x_{c},\underline{y},0\right)$
on $D(\rho+\delta)\setminus U$ is by hypothesis negative (see Equation
\eqref{eq:maxawayzero}). Therefore for $n$ sufficiently large we
have that for all $\delta\in\left[-\sqrt{n^{-1}\ln n},\sqrt{n^{-1}\ln n}\right]$

\begin{eqnarray}
\max_{(x_{0},x_{c},\underline{y})\in D(\rho+\delta)\setminus U}g_{p,\rho+\delta}^{\text{BSC}}\left(x_{0},x_{c},\underline{y},0\right) & \leq & 0,
\end{eqnarray}
 which concludes the proof.
\end{proof}

\section*{Acknowledgment}

M. Vuffray acknowledge support from the LDRD Program at Los Alamos
National Laboratory by the National Nuclear Security Administration
of the U.S. Department of Energy under Contract No. DE-AC52-06NA25396.
The work of T. Misiakiewicz was partially supported by the National
Science Foundation award No. 1128501 at New Mexico Consortium. The
authors thanks N. Macris and M. Chertkov for discussions and comments.

\bibliographystyle{IEEEtran}
\bibliography{refloop,references}

% Generated by IEEEtran.bst, version: 1.13 (2008/09/30)
\newcommand{\SortNoop}[1]{}
\begin{thebibliography}{10}
\providecommand{\url}[1]{#1}
\csname url@samestyle\endcsname
\providecommand{\newblock}{\relax}
\providecommand{\bibinfo}[2]{#2}
\providecommand{\BIBentrySTDinterwordspacing}{\spaceskip=0pt\relax}
\providecommand{\BIBentryALTinterwordstretchfactor}{4}
\providecommand{\BIBentryALTinterwordspacing}{\spaceskip=\fontdimen2\font plus
\BIBentryALTinterwordstretchfactor\fontdimen3\font minus
  \fontdimen4\font\relax}
\providecommand{\BIBforeignlanguage}[2]{{%
\expandafter\ifx\csname l@#1\endcsname\relax
\typeout{** WARNING: IEEEtran.bst: No hyphenation pattern has been}%
\typeout{** loaded for the language `#1'. Using the pattern for}%
\typeout{** the default language instead.}%
\else
\language=\csname l@#1\endcsname
\fi
#2}}
\providecommand{\BIBdecl}{\relax}
\BIBdecl

\bibitem{gallager1968information}
R.~G. Gallager, \emph{Information theory and reliable communication}.\hskip 1em
  plus 0.5em minus 0.4em\relax Wiley, 1968.

\bibitem{Urb.45}
S.-Y. Chung, J.~Forney, G.D., T.~Richardson, and R.~Urbanke, ``On the design of
  low-density parity-check codes within 0.0045 db of the shannon limit,''
  \emph{Communications Letters, IEEE}, vol.~5, no.~2, pp. 58--60, Feb 2001.

\bibitem{Richardson2001}
T.~Richardson, M.~Shokrollahi, and R.~Urbanke, ``Design of capacity-approaching
  irregular low-density parity-check codes,'' \emph{Information Theory, IEEE
  Transactions on}, vol.~47, no.~2, pp. 619--637, Feb 2001.

\bibitem{oswald2002capacity}
P.~Oswald and A.~Shokrollahi, ``Capacity-achieving sequences for the erasure
  channel,'' \emph{Information Theory, IEEE Transactions on}, vol.~48, no.~12,
  pp. 3017--3028, 2002.

\bibitem{Richardson2001a}
T.~Richardson and R.~Urbanke, ``The capacity of low-density parity-check codes
  under message-passing decoding,'' \emph{Information Theory, IEEE Transactions
  on}, vol.~47, no.~2, pp. 599--618, Feb 2001.

\bibitem{Shamai2002}
S.~Shamai and I.~Sason, ``Variations on the gallager bounds, connections, and
  applications,'' \emph{Information Theory, IEEE Transactions on}, vol.~48,
  no.~12, pp. 3029--3051, Dec 2002.

\bibitem{Measson2009}
C.~Measson, A.~Montanari, T.~Richardson, and R.~Urbanke, ``The generalized area
  theorem and some of its consequences,'' \emph{Information Theory, IEEE
  Transactions on}, vol.~55, no.~11, pp. 4793--4821, Nov 2009.

\bibitem{Korada2008}
S.~Korada, S.~Kudekar, and N.~Macris, ``Concentration of magnetization for
  linear block codes,'' in \emph{Information Theory, 2008. ISIT 2008. IEEE
  International Symposium on}, July 2008, pp. 1433--1437.

\bibitem{richardson2008modern}
T.~Richardson and R.~L. Urbanke, \emph{Modern coding theory}.\hskip 1em plus
  0.5em minus 0.4em\relax Cambridge University Press, 2008.

\bibitem{korada2007exact}
S.~Korada, S.~Kudekar, and N.~Macris, ``Exact solution for the conditional
  entropy of {P}oissonian {L}{D}{P}{C} codes over the binary erasure channel,''
  in \emph{Information Theory, 2007. ISIT 2007. IEEE International Symposium
  on}, June 2007, pp. 1016--1020.

\bibitem{montanari2005tight}
A.~Montanari, ``Tight bounds for {LDPC} and {LDGM} codes under {MAP}
  decoding,'' \emph{Information Theory, IEEE Transactions on}, vol.~51, no.~9,
  pp. 3221--3246, 2005.

\bibitem{macris2007griffith}
N.~Macris, ``{G}riffith--{K}elly--{S}herman correlation inequalities: A useful
  tool in the theory of error correcting codes,'' \emph{Information Theory,
  IEEE Transactions on}, vol.~53, no.~2, pp. 664--683, Feb 2007.

\bibitem{kudekar2009sharp}
S.~Kudekar and N.~Macris, ``Sharp bounds for optimal decoding of low-density
  parity-check codes,'' \emph{Information Theory, IEEE Transactions on},
  vol.~55, no.~10, pp. 4635--4650, Oct 2009.

\bibitem{guerra2004high}
\BIBentryALTinterwordspacing
F.~Guerra and F.~L. Toninelli, ``\BIBforeignlanguage{English}{The high
  temperature region of the {V}iana--{B}ray diluted spin glass model},''
  \emph{\BIBforeignlanguage{English}{Journal of Statistical Physics}}, vol.
  115, no. 1-2, pp. 531--555, 2004. [Online]. Available:
  \url{http://dx.doi.org/10.1023/B%3AJOSS.0000019815.11115.54}
\BIBentrySTDinterwordspacing

\bibitem{guerra2002interpolation}
------, ``The thermodynamic limit in mean field spin glass models,''
  \emph{Communications in Mathematical Physics}, vol. 230, no.~1, pp. 71--79,
  2002.

\bibitem{kudekar2011decay}
\BIBentryALTinterwordspacing
S.~Kudekar and N.~Macris, ``Decay of correlations for sparse graph error
  correcting codes,'' \emph{SIAM Journal on Discrete Mathematics}, vol.~25,
  no.~2, pp. 956--988, 2011. [Online]. Available:
  \url{http://epubs.siam.org/doi/abs/10.1137/090751827}
\BIBentrySTDinterwordspacing

\bibitem{Macris2013}
N.~Macris and M.~Vuffray, ``The {B}ethe free energy allows to compute the
  conditional entropy of graphical code instances. {A} proof from the polymer
  expansion,'' \emph{arXiv preprint arXiv:1310.1294}, 2013.

\bibitem{Giurgiu2013}
A.~Giurgiu, N.~Macris, and R.~Urbanke, ``Spatial coupling as a proof technique
  and three applications,'' \emph{arXiv preprint arXiv:1301.5676}, 2015.

\bibitem{ZigFel}
A.~Jimenez~Felstrom and K.~Zigangirov, ``Time-varying periodic convolutional
  codes with low-density parity-check matrix,'' \emph{Information Theory, IEEE
  Transactions on}, vol.~45, no.~6, pp. 2181--2191, Sep 1999.

\bibitem{chertkov2006loop}
\BIBentryALTinterwordspacing
M.~Chertkov and V.~Y. Chernyak, ``Loop series for discrete statistical models
  on graphs,'' \emph{Journal of Statistical Mechanics: Theory and Experiment},
  vol. 2006, no.~06, p. P06009, 2006. [Online]. Available:
  \url{http://stacks.iop.org/1742-5468/2006/i=06/a=P06009}
\BIBentrySTDinterwordspacing

\bibitem{mckay2010subgraphs}
B.~D. McKay, ``Subgraphs of random graphs with specified degrees,'' in
  \emph{Proceedings of the International Congress of Mathematicians}, vol.~4,
  2010, pp. 2489--2501.

\bibitem{Macris2012}
N.~Macris and M.~Vuffray, ``Beyond the {B}ethe free energy of {L}{D}{P}{C}
  codes via polymer expansions,'' in \emph{Information Theory Proceedings
  (ISIT), 2012 IEEE International Symposium on}, July 2012, pp. 2331--2335.

\bibitem{mezard09information}
M.~M{\'e}zard and A.~Montanari, \emph{Information, physics, and
  computation}.\hskip 1em plus 0.5em minus 0.4em\relax Oxford University Press,
  2009.

\bibitem{Urruty2001}
J.-B.~H. Urruty and C.~Lemar{\'e}chal, \emph{Fundamentals of convex
  analysis}.\hskip 1em plus 0.5em minus 0.4em\relax Springer, 2001.

\bibitem{Vuffray2014}
\BIBentryALTinterwordspacing
M.~Vuffray, ``\BIBforeignlanguage{eng}{The cavity method in coding theory},''
  Ph.D. dissertation, IC, Lausanne, 2014, available at
  \url{http://infoscience.epfl.ch/record/196951/files/EPFL_TH6088.pdf}.
  [Online]. Available:
  \url{http://infoscience.epfl.ch/record/196951/files/EPFL_TH6088.pdf}
\BIBentrySTDinterwordspacing

\end{thebibliography}

\end{document}